\documentclass[conference]{IEEEtran}
\IEEEoverridecommandlockouts
\usepackage{cite}
\usepackage{amsmath,amssymb,amsfonts}
\usepackage{graphicx}
\usepackage{textcomp}
\usepackage{xcolor}
\usepackage{amsmath}
\usepackage{graphics}
\usepackage{subcaption}
\usepackage{float}
\usepackage{enumerate}
\usepackage{balance}
\usepackage{todonotes}
\usepackage{mathtools}
\usepackage{booktabs}
\usepackage[linesnumbered,ruled,vlined]{algorithm2e}
\usepackage[a-1b]{pdfx}
\usepackage[nameinlink]{cleveref}
\usepackage{lipsum}
\usepackage{titlesec}
\usepackage{amsthm}
\usepackage{amsmath}
\usepackage{enumitem}
\usepackage{xspace}
\usepackage{setspace}
\usepackage{thmtools, thm-restate}
\bibliographystyle{IEEEtran}

\newtheorem{example}{Example}
\newtheorem{definition}{Definition}
\newtheorem{lemma}{Lemma}
\newtheorem{theorem}{Theorem}

\hypersetup{
    colorlinks=false,
    pdfborder={0 0 0},
}

\newcommand{\SubHyMa}{SubHyMa\xspace}
\newcommand{\HyperISO}{HyperISO\xspace}
\newcommand{\HGMatch}{HGMatch\xspace}
\newcommand{\GuP}{GuP\xspace}
\newcommand{\OHMiner}{OHMiner\xspace}
\newcommand{\MaCH}{MaCH\xspace}

\newcommand{\abs}[1]{{|{#1}|}}
\newcommand{\I}{\mathcal{I}}
\newcommand\availabilityurl{https://github.com/SNUCSE-CTA/MaCH}
\linepenalty=999
\widowpenalty=999
\clubpenalty=999
\everypar{\looseness=-1}

\setlength{\abovedisplayskip}{3pt}
\setlength{\belowdisplayskip}{3pt}

\setlength{\abovecaptionskip}{3pt}
\setlength{\floatsep}{0.45\baselineskip plus 0.1\baselineskip minus 0.2\baselineskip}

\setlength{\textfloatsep}{0.8\baselineskip plus 0.1\baselineskip minus 0.3\baselineskip}

\setlength{\dblfloatsep}{0.45\baselineskip plus 0.1\baselineskip minus 0.2\baselineskip}
\setlength{\dbltextfloatsep}{0.8\baselineskip plus 0.1\baselineskip minus 0.3\baselineskip}

\renewcommand{\paragraph}[1]{\noindent\textbf{#1.}}

\def\BibTeX{{\rm B\kern-.05em{\sc i\kern-.025em b}\kern-.08em
    T\kern-.1667em\lower.7ex\hbox{E}\kern-.125emX}}

\begin{document}

\newif\ifsubmit

\ifsubmit
\title{Efficient Hypergraph Pattern Matching via Match-and-Filter and Intersection Constraint}
\else
\title{Efficient Hypergraph Pattern Matching via Match-and-Filter and Intersection Constraint}
\fi

\author
{
    \IEEEauthorblockN{Siwoo Song$^{\dagger}$, Wonseok Shin$^{\ddagger}$, Kunsoo Park$^{*\mathsection}$, Giuseppe F. Italiano$^{\mathparagraph}$, Zhengyi Yang$^{\lozenge}$, Wenjie Zhang$^{\lozenge}$}
    \IEEEauthorblockA{
    $^{\dagger}$\textit{Samsung Research, Korea} \hspace{3mm} $^{\ddagger}$\textit{Standigm Inc, Korea} \hspace{3mm}
    $^{\mathsection}$\textit{Seoul National University, Korea} \\ 
    $^{\mathparagraph}$\textit{LUISS University, Italy} \hspace{3mm} $^{\lozenge}$\textit{University of New South Wales, Australia}\\
     \texttt{$^{\dagger}$s6uos.song@samsung.com \hspace{3mm} $^{\ddagger}$wonseok.shin@standigm.com\hspace{3mm}
     $^{\mathsection}$kpark@theory.snu.ac.kr}\\
    \texttt{$^{\mathparagraph}$gitaliano@luiss.it \hspace{3mm} $^{\lozenge}$\{zhengyi.yang,wenjie.zhang\}@unsw.edu.au} 
    }
}

\maketitle

\begin{abstract}
A hypergraph is a generalization of a graph, in which a hyperedge can connect multiple vertices, modeling complex relationships involving multiple vertices simultaneously.
Hypergraph pattern matching, which is to find all isomorphic embeddings of a query hypergraph in a data hypergraph, is one of the fundamental problems.
In this paper, we present a novel algorithm for hypergraph pattern matching by introducing (1) the intersection constraint, a necessary and sufficient condition for valid embeddings, which significantly speeds up the verification process,
(2) the candidate hyperedge space, a data structure that stores potential mappings between hyperedges in the query hypergraph and the data hypergraph,
and (3) the Match-and-Filter framework, which interleaves matching and filtering operations to maintain only compatible candidates in the candidate hyperedge space during backtracking.
Experimental results on real-world datasets demonstrate that our algorithm significantly outperforms the state-of-the-art algorithms, by up to orders of magnitude in terms of query processing time.
\end{abstract}

\begin{IEEEkeywords}
Hypergraph Pattern Matching, Intersection Constraint, Candidate Hyperedge Space, Match-and-Filter.
\end{IEEEkeywords}

\section{Introduction}

\begin{figure}[t]
    \centering
    \begin{subfigure}{0.38\linewidth}
        \centering
        \includegraphics[trim={0 0 0 0.5cm},clip,width=\linewidth]{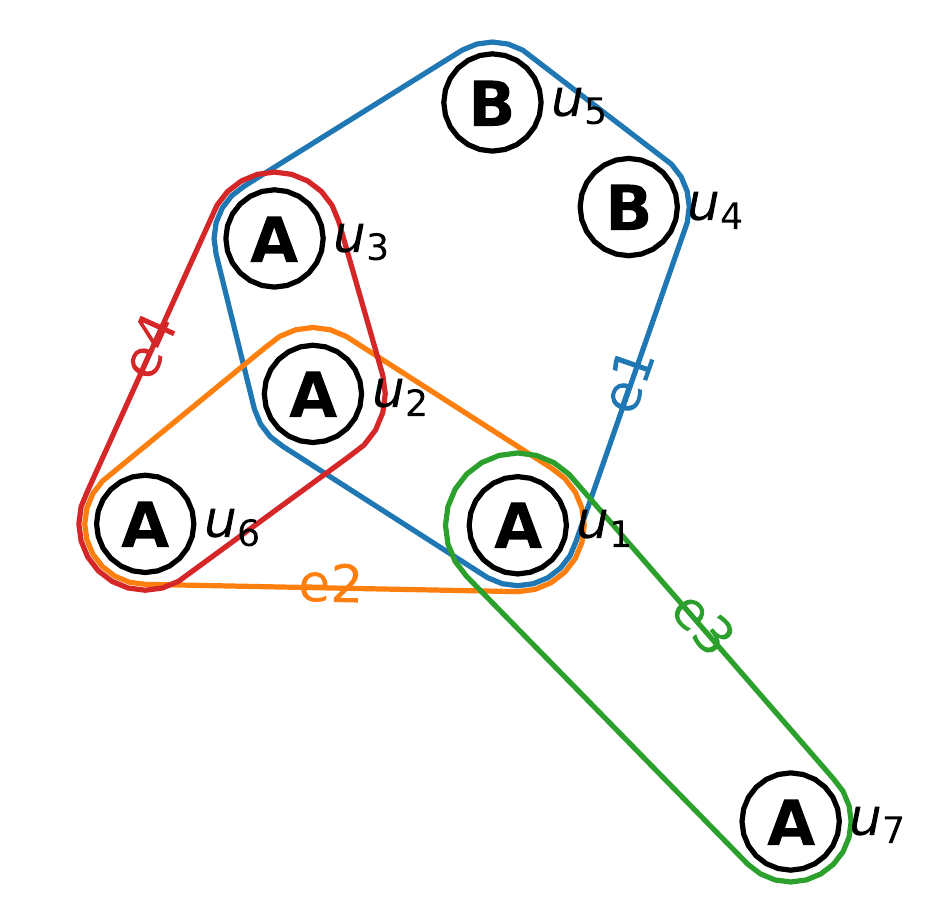}
        \caption{Query hypergraph $q$}
        \label{subfig:query}
    \end{subfigure}
    \begin{subfigure}{0.45\linewidth}
        \centering
        \includegraphics[trim={0 0 0 0.5cm},clip,width=\linewidth]{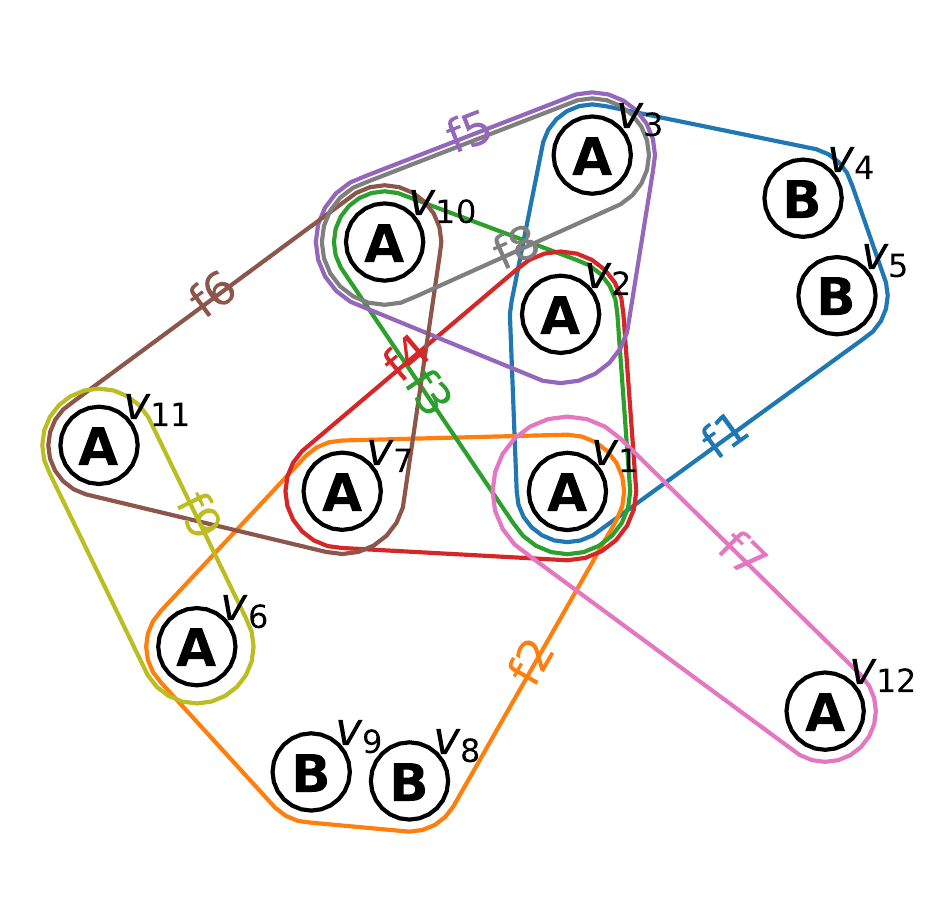}
        \caption{Data hypergraph $H$}
        \label{subfig:data}
    \end{subfigure}

    \begin{subfigure}{0.4\linewidth}
        \centering
        \includegraphics[width=0.9\linewidth]{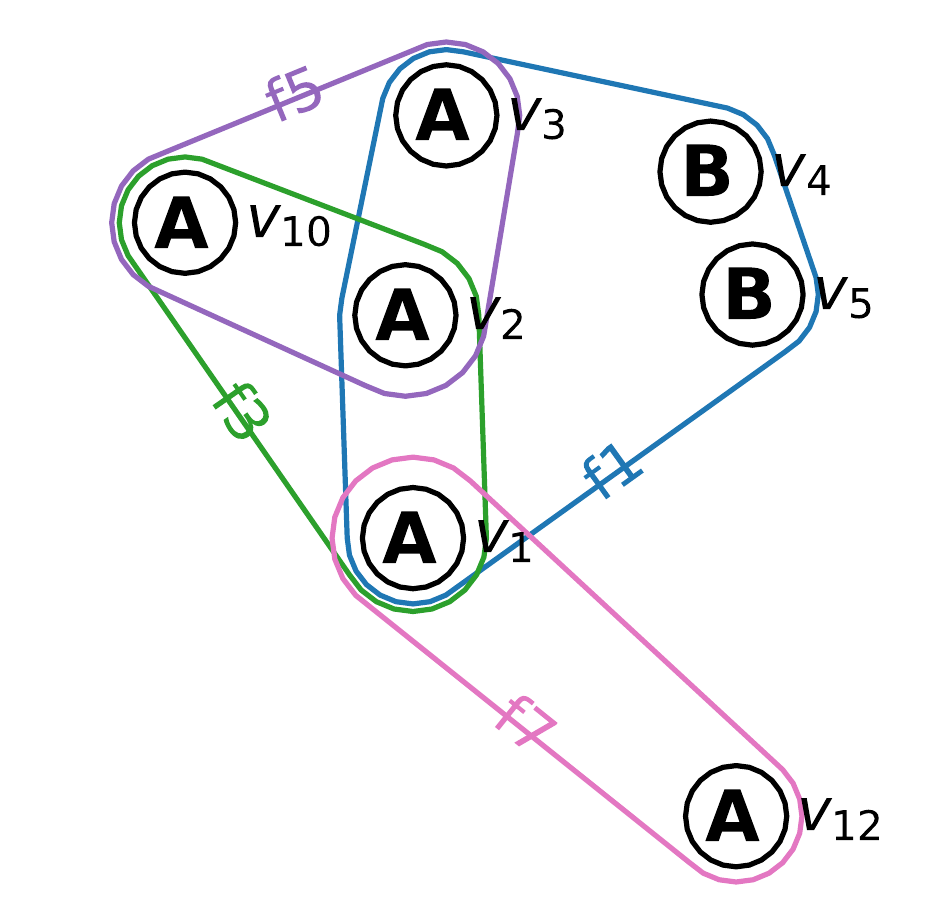}
        \caption{Embedding of $q$ in $H$}
        \label{subfig:embedding}
    \end{subfigure}
    \begin{subfigure}{0.42\linewidth}
        \centering
        \includegraphics[width=0.9\linewidth]{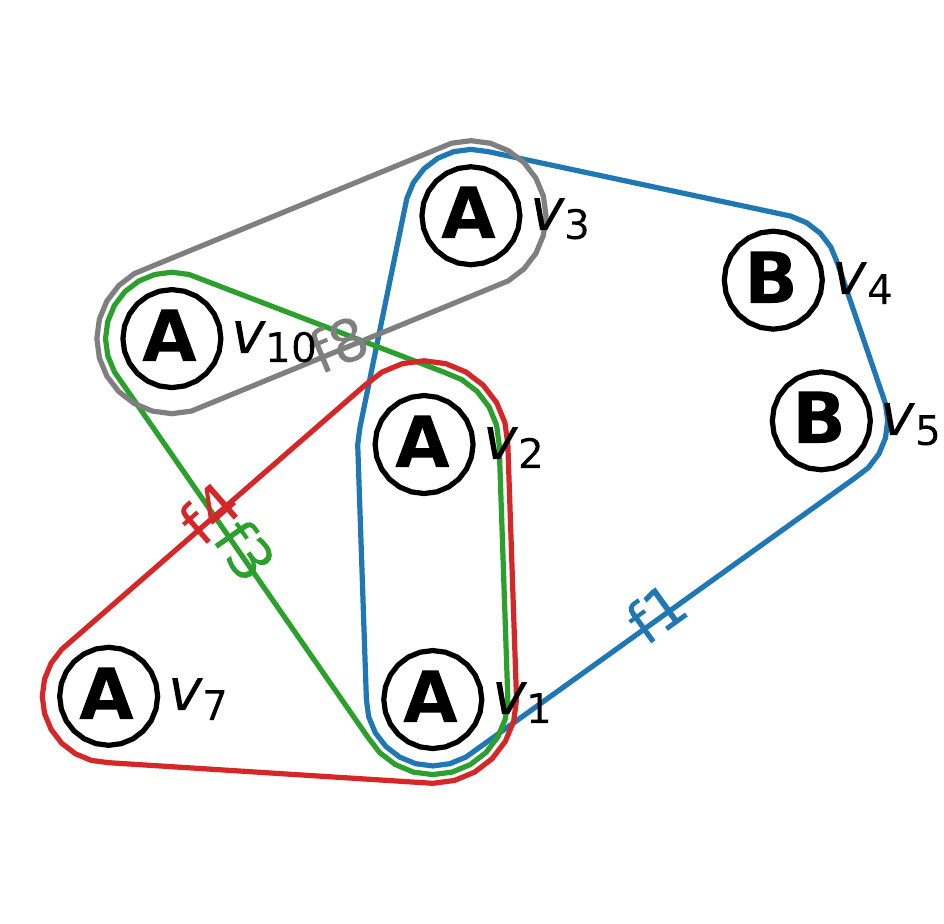}
        \caption{Wrong mapping}
        \label{subfig:wrongmapping}
    \end{subfigure}
    \caption{Example of hypergraph pattern matching where `A' and `B' are vertex labels, $u_i$ and $v_i$ are vertex IDs, and $e_i$ and $f_i$ are hyperedge IDs.}
    \label{fig:hypergraphs}
\end{figure}

Graphs are widely used to model relationships in various domains, such as social networks, bioinformatics, and chemistry. Traditional graphs represent pairwise relationships between vertices, in which an edge connects exactly two vertices. However, this pairwise representation has limitations when it comes to capturing complex relationships involving multiple vertices simultaneously that are common in many real-world scenarios \cite{ApplicationPPI, ApplicationPPI2, ApplicationSN, ApplicationBIO}. A hypergraph is a generalization of a graph by allowing edges to connect any number of vertices.
In recent years, hypergraphs have gained increasing attention as they can better represent complex relationships that arise in diverse domains, including chemical reaction networks \cite{CRN}, protein interactions \cite{ApplicationPPI, ApplicationPPI2}, knowledge bases \cite{fatemi2021knowledge}, collaboration networks \cite{lung2018hypergraph}, social networks \cite{ApplicationSN}, and electronic circuits \cite{Circuit0, Circuit1}.

Pattern matching, which is to find all isomorphic embeddings of a query object in a larger object, is one of the fundamental problems in computer science, and it has been studied in various objects such as strings, trees, graphs, and hypergraphs.
If the objects are strings, the pattern matching problem is called string matching \cite{KMP, BM}; if trees, tree pattern matching \cite{TreePatternMatching, PolyTreePatternMatching}; if graphs, subgraph matching \cite{DAF, CFLMatch}; if hypergraphs, hypergraph pattern matching (also known as subhypergraph matching \cite{HGMatch}). In this paper we focus on hypergraph pattern matching, which is a fundamental problem in understanding and analyzing hypergraph data.
In \Cref{fig:hypergraphs}, for example, the subhypergraph in \Cref{subfig:embedding} is an embedding (i.e., mapping $\{(e_1,f_1), (e_2,f_3),(e_3,f_7), (e_4,f_5)\}$) of the query hypergraph $q$ (\Cref{subfig:query}) in the data hypergraph $H$ (\Cref{subfig:data}).

Hypergraph pattern matching is essential in various applications. 
For example, electronic circuits are naturally represented as hypergraphs (\Cref{fig:Circuit}), where vertices model devices (such as transistors) and a hyperedge models a net connecting multiple devices simultaneously. 
By hypergraph pattern matching on electronic circuits, we can verify whether patented subcircuits appear in a larger circuit or locate problematic subcircuit patterns that may cause failures. 
However, hypergraph pattern matching is an NP-hard problem \cite{GareyAndJohnson}, requiring substantial computational time for large hypergraph data or queries.

\begin{figure}[t]
    \begin{subfigure}{0.49\linewidth}
        \centering
        \includegraphics[width=\linewidth,trim= 0 20 0 10,clip]{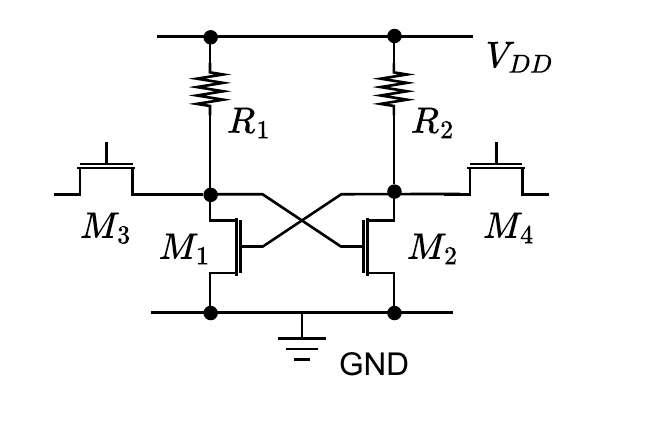}
        \caption{Transistor-level Circuit}
        \label{subfig:AnalogCircuit}
    \end{subfigure}
    \begin{subfigure}{0.49\linewidth}
\includegraphics[width=\linewidth]{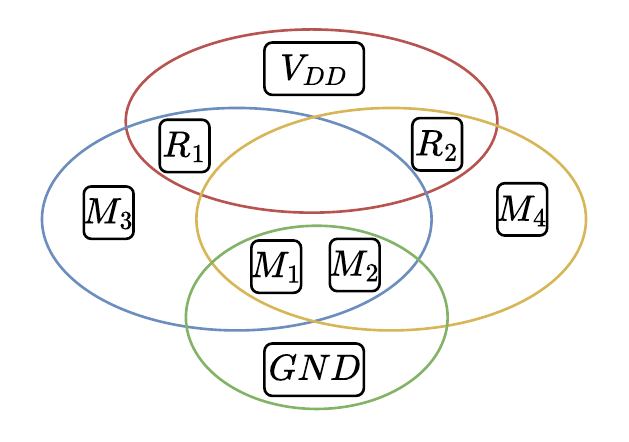}
        \caption{Hypergraph}
        \label{subfig:HypergraphCircuit}
    \end{subfigure}
    \caption{Hypergraph representation of a transistor-level circuit, where vertices are devices (transistors $M_1$ to $M_4$, resistors $R_1$ and $R_2$, $V_{DD}$, and $GND$) and a hyperedges is a net connecting multiple devices simultaneously.}
    \label{fig:Circuit}
    \vspace{0.5em}
\end{figure}

\paragraph{Existing Approaches and Limitations}
There are several approaches for hypergraph pattern matching. One approach is to transform a hypergraph into a bipartite graph. In this method, vertices and hyperedges from the original hypergraph become vertices in the bipartite graph, with edges representing incidence relationships.
Although this approach allows us to apply existing subgraph matching algorithms, it fails to utilize the properties specific to hypergraphs, resulting in limited performances \cite{SubHyMa}.

To address such issues, recent works \cite{IHSfilter, HyperIso, SubHyMa} extend an algorithm for subgraph matching to hypergraph pattern matching instead of transforming the hypergraph. 
Ha et al. \cite{IHSfilter} extend the subgraph matching algorithm Turbo\textsubscript{ISO} \cite{TurboIso}, and Yang et al. \cite{HGMatch} extend more recent algorithms, namely CFL \cite{CFLMatch}, DAF \cite{DAF}, and CECI~\cite{CECI}, to hypergraph pattern matching.

\HGMatch \cite{HGMatch} introduces the Match-by-Hyperedge framework, which maps query hyperedges directly to data hyperedges instead of mapping query vertices. Additionally, \HGMatch proposes a necessary and sufficient condition for checking the validity of embeddings, which doesn't require computing a mapping of vertices. These techniques avoid redundant computation in mapping vertices and significantly improves efficiency.
However, its methods for generating hyperedge candidates on-the-fly during the matching phase and verifying these candidates through repeated checks of hyperedges consume significant processing time.

Very recently, \OHMiner \cite{OHMiner}, the state-of-the-art hypergraph pattern matching algorithm, suggested a verification technique based on an overlap intersection graph to solve these inefficiencies.
However, their approach still requires set intersection operations, which limits overall performance. Furthermore, its high memory overhead makes it not viable for handling large-scale hypergraphs.

\paragraph{Contributions}
Compared to well-studied subgraph matching, hypergraph pattern matching presents unique challenges due to the complex connectivity patterns in hypergraphs. This complexity significantly increases the difficulty of finding valid embeddings that satisfy all constraints imposed by a query hypergraph.

However, these complex connectivity patterns in hypergraphs also provide opportunities for effective pruning of the search space by enabling early detection of invalid embeddings. 
Thus, strong yet effective constraints for hypergraph pattern matching are crucial for developing efficient algorithms. These constraints can not only prune the search space initially but also continue to prune it as the matching progresses, leading to a more efficient overall matching process.

In this paper, we present a new algorithm, \MaCH (\underline{Ma}t\underline{c}h-and-Filter for \underline{H}ypergraph Pattern Matching).
\begin{enumerate}[leftmargin=*]
    \item We introduce two novel constraints, the \emph{connectivity constraint} for the relationship between two hyperedges (e.g., two query hyperedges $e_1$ and $e_2$ share two vertices both labeled `A') and the \emph{intersection constraint} for the relationships involving three or more hyperedges (e.g., $e_1$, $e_2$, and $e_3$ share a vertex labeled `A'). These constraints effectively capture the complex connectivity patterns in hypergraphs, enabling efficient hypergraph pattern matching. Especially, we prove that the intersection constraint is a necessary and sufficient condition for valid embeddings (\Cref{thm:IntersectionSignature} and \Cref{thm:HILCequivalence}). While \HGMatch also proposed a necessary and sufficient condition for valid embeddings, called the equivalence of vertex profiles, our condition is faster both theoretically and practically.

    \item We build a new auxiliary data structure, \emph{candidate hyperedge space} (CHS), which stores candidates for query hyperedges rather than query vertices. CHS better handles vertex automorphisms and the relationships among multiple vertices in hypergraphs, overcoming limitations of vertex-based structures.

    \HGMatch enumerates embeddings on-the-fly, which may lead to a potentially large search space.
    CHS allows us to filter out unpromising candidates more effectively.
    We apply the connectivity constraint to filter out invalid candidates (which cannot be included in any embeddings) in the CHS, significantly reducing the search space.
    Our experiment shows that up to 99.3\% of candidates are removed by the connectivity constraint on the CHS.
    
    \item We propose a novel framework, \emph{Match-and-Filter}, which interleaves matching and filtering operations during backtracking. 
 Subgraph matching algorithms typically follow a filtering and matching framework, in which all filtering occurs before the matching process begins. In contrast, our framework integrates filtering operations directly into the matching process.
This framework is particularly well-suited for hypergraph pattern matching in which a hyperedge can connect multiple vertices simultaneously, leading to complex connectivity patterns that are hard to capture in a filtering phase.

We use both the intersection constraint and the connectivity constraint to filter out invalid candidates in the CHS as the matching progresses. This match-and-filter significantly prunes the search space, leading to a speedup of up to 20 times in query processing time, when compared to the matching without filtering.
\end{enumerate}

Experiments on real-world datasets demonstrate that \MaCH significantly outperforms the state-of-the-art hypergraph pattern matching algorithms, \HGMatch and \OHMiner, by up to orders of magnitude in terms of query processing time.

\ifsubmit
The source code and Appendix, containing omitted proofs and experiments, are available online.\footnote{\availabilityurl}
\else
The source code is available online.\footnote{\availabilityurl}
\fi

\section{Preliminaries}
\label{sec:Preliminaries}

\subsection{Problem Statement}

In this paper, we focus on simple, undirected, and connected hypergraphs with vertices labeled, while the techniques we propose can be readily extended to non-simple, disconnected, hyperedge-labeled, or unlabeled hypergraphs.

\begin{definition}[Hypergraph]
    A hypergraph $H$ is defined as a tuple $H=(V,E,L)$ where $V$ is a finite set of vertices and $E\subseteq 2^V$ is a set of non-empty subsets of $V$ called hyperedges, which satisfies $ \bigcup_{e\in E} e = V$. $L$ is a label function that assigns each vertex $v$ a label in $\Sigma$, which is the set of labels.
\end{definition}

In a hypergraph $H = (V_H, E_H, L_H)$, adjacency and incidence are defined as follows. Two vertices $u, v \in V_H$ are adjacent if there exists a hyperedge $e \in E_H$ such that $\{u, v\} \subseteq e$. Two hyperedges $e, f \in E_H$ are adjacent if $e \cap f \neq \emptyset$.
A vertex $v \in V_H$ and a hyperedge $e \in E_H$ are incident if $v \in e$.
The arity of a hyperedge $e$, denoted by $\abs{e}$, is the number of vertices in $e$. The average arity $\overline{a_H}$ of $H$ is defined as $\frac{\sum_{e\in E_H} \abs{e}}{\abs{E_H}}$, and the maximum arity $a_H^{\max}$ is defined as $\max_{e\in E_H} \abs{e}$.

A hypergraph $H'=(V_{H'}, E_{H'}, L_H)$ is a subhypergraph of $H$ if $V_{H'}\subseteq V_{H}$ and $E_{H'}\subseteq E_{H}$.

\begin{definition}[Subhypergraph isomorphism]
    Given a query hypergraph $q=(V_q, E_q, L_q)$ and a data hypergraph $H=(V_H, E_H, L_H)$, $q$ is subhypergraph isomorphic to $H$ if and only if there is an injective mapping $\phi: V_q\rightarrow V_H$ such that, $\forall u\in V_q$, $L_q(u)=L_H(\phi(u))$ and $\forall e_q=\{u_1,\dots, u_k\}\in E_q$, $\exists e_H = \{\phi(u_1),\dots, \phi(u_k)\}\in E_H$. We call such an $\phi$ as subhypergraph isomorphism.
\end{definition}

For $e=\{u_1,\dots, u_k\}$ and subhypergraph isomorphism $\phi$, we use the notion $\phi(e) = \{\phi(u_1),\dots, \phi(u_k)\}$ to denote the mapped hyperedge in the data hypergraph.
A \emph{subhypergraph embedding} is represented as the set of hyperedge pairs $\{(e_1, \phi(e_1)), (e_2, \phi(e_2)),\allowbreak\dots,\allowbreak (e_m, \phi(e_m))\}$ for the set of query hyperedges $E_q=\{e_1,e_2,\dots,e_m\}$. As in \HGMatch \cite{HGMatch}, embeddings are considered equivalent if their sets of hyperedge pairs are identical, even if the underlying vertex mappings differ.
A subhypergraph of the query hypergraph $q$ is called a partial query. An embedding of a partial query in $H$ is called a partial embedding.

In \Cref{fig:hypergraphs}, $\{(e_1, f_1),(e_2,f_3),(e_3,f_7),(e_4,f_5)\}$ is an embedding of $q$ in $H$. One possible underlying subhypergraph isomorphism is $\{(u_1,v_1),\allowbreak (u_2,v_2),\allowbreak (u_3,v_3),(u_4,v_4),(u_5, v_5), (u_6,v_{10}),(u_7,v_{12})\}$. We can obtain another valid subhypergraph isomorphism by swapping the mappings of $u_4$ and $u_5$, resulting in $(u_4, v_5)$ and $(u_5,v_4)$. Despite this change in the vertex mapping, we consider these as the same embedding because the hyperedge mapping remains unchanged.

\begin{table}[t]
    \caption{Frequently Used Notations}
    \label{tab:notations}
    \centering
    \resizebox{1.0\linewidth}{!}{
    \begin{tabular}{ll}
    \toprule
    Notation & Definition \\\midrule
    $q, H$ & Query hypergraph and data hypergraph \\ 
    $V_h, E_h, L_h$ & Vertices, hyperedges, labels of a hypergraph $h$ \\
    $M$ & Embedding or partial embedding \\
    $C(e)$ & Set of candidate hyperedges for $e$ \\
    $\abs{e}$ & Arity of a hyperedge $e$ in a hypergraph $h$ \\
    $Sig(\{v_1, v_2,\dots, v_n\})$ & The multiset of labels of vertices $v_i$'s\\
    \bottomrule
    \end{tabular}
    }
\end{table}

\paragraph{Problem Statement} Given a query hypergraph $q$ and a data hypergraph $H$, the \emph{hypergraph pattern matching problem} is to find all subhypergraph embeddings of $q$ in $H$. 

\subsection{Related Works}
\paragraph{Hypergraph Pattern Matching}
Ha et al. \cite{IHSfilter} extends the subgraph matching algorithm, Turbo\textsubscript{ISO} \cite{TurboIso}, to hypergraph pattern matching, introducing a technique called the incident hyperedge structure filter.
Their algorithm generates candidates that pass the filter for each query vertex prior to matching and verifies each candidate during the matching phase. 

In contrast to previous approaches that map query vertices to data vertices in backtracking, \HGMatch \cite{HGMatch} introduces the Match-by-Hyperedge framework. This framework directly maps query hyperedges to data hyperedges, thereby reducing redundant computations for enumerating vertex mappings. Instead of generating candidates prior to matching, \HGMatch generates candidates for each hyperedge and then verifies partial embeddings extended by these candidates during the matching phase. \HGMatch proposes a necessary and sufficient condition for verification, called the equivalence of vertex profiles.
\OHMiner \cite{OHMiner} presents an overlap-centric system for hypergraph pattern mining. It constructs a DAG called an Overlap Intersection Graph (OIG) by computing intersections of hyperedges to group vertices that share incident hyperedges.
During matching, \OHMiner maintains an OIG for partial embeddings to verify the partial embedding.

There are also problems similar to hypergraph pattern matching, with a different definition of an embedding.
\SubHyMa \cite{SubHyMa} defines an embedding as a vertex mapping, instead of a hyperedge mapping. It presents several filtering techniques that eliminate invalid candidates for query vertices by utilizing the structural properties of hyperedges and the co-occurrence of vertex pairs within them.
\HyperISO \cite{HyperIso} deals with the problem of subhypergraph containment, in which the vertices in a query hyperedge can be a subset of the vertices in the mapped data hyperedge.
It generates hyperedge candidates based on the arity and the number of adjacent hyperedges of a query hyperedge,
and it has filtering and matching phases.

\paragraph{Subgraph Matching}
Subgraph matching is a special case of hypergraph pattern matching in which an edge connects only two vertices. A subgraph matching algorithm can be used as a subroutine of hypergraph pattern matching or it can be extended to solve hypergraph pattern matching.

Numerous subgraph matching algorithms have been proposed \cite{TurboIso, VF2, VCSubgraphMatching, CFLMatch, DAF, VEQ, BICE, IVE, GuP}. 
Recent works have successfully employed the filtering-backtracking approach, utilizing an auxiliary data structure containing a set of candidate vertices for each query vertex. In the filtering phase, these algorithms remove invalid candidates from the data structure.
During backtracking, the query vertices are iteratively mapped to one of their candidates according to a matching order.
Apart from backtracking algorithms, join-based methods \cite{WCOSubgraphQueryProcessing, RapidMatch} for subgraph matching have also been studied, modeling subgraph matching as a join query in a relational database.
Interested readers can refer to the extensive surveys of recent subgraph matching algorithms \cite{Indepth, ComprehensiveSurvey}.

Subgraph matching and related problems have been extended to property graphs, where each vertex and edge can have multiple labels and key-value pairs (properties) \cite{han2024implementation, xie2024vertexsurge, van2022general}. Subgraph matching on property graphs has been of great interest in graph database systems \cite{pang2025unified} as it can express rich semantics using graph-structured data.
Unlike property graphs which have pairwise edges, hypergraphs allow edges to connect multiple vertices simultaneously, requiring fundamentally different matching strategies.

\section{Overview}
\label{sec:Overview}
\begin{algorithm}[t]
\setstretch{0.93}
    \caption{\MaCH}
    \label{Alg:Framework}
    \rm
    \SetKwProg{Fn}{Function}{:}{}
    \SetKwInOut{Input}{Input}
    \SetKwInOut{Output}{Output}
    \Input{Query hypergraph $q$, data hypergraph $H$}
    \Output{All embeddings of $q$ in $H$}
    $C_{ini}\gets$ BuildCandidateHyperedgeSpace($q, H$)\\
    Filtering($q,C_{ini}$)\\
    $M\gets \emptyset$\\
    MatchAndFilter($q,H,C, M$)\\
\end{algorithm}

\paragraph{Constraints for Hypergraph Pattern Matching}
We introduce two novel constraints, the \emph{connectivity constraint} and the \emph{intersection constraint}.
These constraints are utilized for \emph{filtering}, which is to remove candidates from the candidate hyperedge space (CHS) (\Cref{sec:Constraints}).

\paragraph{Building Candidate Hyperedge Space and Filtering}
Recent algorithms for subgraph matching use an auxiliary data structure to store candidate data vertices for each query vertex.
However, for hypergraph pattern matching, applying such data structures can be inefficient due to two primary factors below.
\begin{enumerate}[leftmargin=*]
    \item Vertex automorphisms:
    Hyperedges often contain multiple vertices that have the same label, leading to redundant computations when matching vertices individually.
    \item Relationships among multiple vertices: A hyperedge connects multiple vertices simultaneously, extending the pairwise connections in traditional graphs. Vertex-based data structures are not suitable for effectively utilizing such relationships in hypergraphs.
\end{enumerate}
To address these limitations, we introduce a \emph{candidate hyperedge space} (CHS), a data structure storing candidates for query hyperedges rather than query vertices. CHS stores potential mappings between hyperedges in the query hypergraph and the data hypergraph. We then apply the connectivity constraint to filter out invalid candidates (which cannot be included in any embeddings) in the CHS, significantly reducing the search space (\Cref{sec:CandidateHyperedgeSpace}).

\paragraph{Match-and-Filter Framework}
Although initial filtering is quite effective, it cannot fully capture the complex connectivity patterns of hypergraphs. These patterns, where multiple vertices are connected simultaneously by three or more hyperedges, often become apparent only during the matching process.
To address this, we introduce a novel match-and-filter framework, which integrates filtering operations directly into the matching process.
This framework interleaves matching and filtering operations, utilizing the current partial embedding to filter out invalid candidates from CHS during the matching process.

\begin{enumerate}[leftmargin=*]
    \item We iteratively extend the current partial embedding by mapping a query hyperedge to a candidate data hyperedge in the CHS.
    \item As the new mapping is added, we apply our connectivity and intersection constraints to identify and eliminate candidates which are incompatible with the current partial embedding.
\end{enumerate}
By integrating filtering operations directly into the matching process, our framework efficiently handles the complex connectivity patterns in hypergraphs (\Cref{sec:MatchAndClean}).

\Cref{Alg:Framework} shows the outline of our algorithm \MaCH, which takes a query hypergraph $q$ and a data hypergraph $H$ as input, and finds all embeddings of $q$ in $H$.

\section{Constraints for Hypergraph Pattern Matching}
\label{sec:Constraints}
In this section, we introduce these constraints: Hyperedge signature constraint, Connectivity constraint, and Intersection constraint.

\subsection{Hyperedge Signature Constraint}

We begin by defining the concept of a signature for a set of vertices, which is fundamental to our constraints.

\begin{definition}
    For a set of vertices $S\subseteq V$, the signature of $S$, denoted as $Sig(S)$, is defined as the multiset of labels of $v\in S$, i.e., $Sig(S) = multiset\{L(v)\mid v\in S\}$.
\end{definition}

This definition extends the concept of signature introduced by \HGMatch \cite{HGMatch}, which originally defined it only for hyperedges. Our generalization allows us to apply this concept to arbitrary sets of vertices, which will be crucial for our connectivity constraint and intersection constraint.
For simplicity of notation, we will use set notation, e.g. $\{A,A,B\}$, to represent a signature, although it is a multiset.
Using this definition, we can now state the Hyperedge Signature Constraint.

\begin{lemma}[Hyperedge Signature Constraint]
Given a query hypergraph $q$ and a data hypergraph $H$, a hyperedge $e \in E_q$ can be mapped to a hyperedge $f \in E_H$ only if $e$ and $f$ have identical signatures.
\end{lemma}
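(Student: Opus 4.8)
The plan is to unfold the definition of subhypergraph isomorphism and show that the label multiset is preserved under the vertex mapping when restricted to $e$. First I would fix any subhypergraph isomorphism $\phi: V_q \rightarrow V_H$ that witnesses the mapping of $e$ to $f$, so that $f = \phi(e) = \{\phi(u_1),\dots,\phi(u_k)\}$ where $e = \{u_1,\dots,u_k\}$. Here ``$e$ can be mapped to $f$'' is taken to mean exactly that such a $\phi$ exists with $\phi(e)=f$.

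The core observation is that $\phi$ restricted to the vertices of $e$ is a bijection onto $f$: it is injective because $\phi$ is injective on all of $V_q$, and it is surjective onto $f$ by the very definition $f = \phi(e)$. This immediately yields $\abs{e} = \abs{f}$, so the two hyperedges have the same arity and there is a perfect pairing between their vertices.

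Next I would invoke the label-preservation property of $\phi$, namely $L_q(u_i) = L_H(\phi(u_i))$ for each $u_i \in e$. Since the restriction $\phi|_e$ pairs each vertex $u_i \in e$ with exactly one vertex $\phi(u_i) \in f$ and vice versa, every label contributed by a vertex of $e$ appears with the same multiplicity among the vertices of $f$. Hence the multiset $\{L_q(u)\mid u\in e\}$ equals the multiset $\{L_H(w)\mid w\in f\}$, which is precisely $Sig(e) = Sig(f)$.

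There is essentially no hard step here, as the claim is a direct structural consequence of the isomorphism definition. The only point requiring care is to use injectivity of $\phi$ (rather than label preservation alone) to guarantee that multiplicities, and not merely the underlying sets of labels, coincide: a label-preserving but non-injective map could collapse distinct vertices and thereby alter the multiset, so injectivity is exactly what upgrades the statement from equality of label \emph{sets} to equality of label \emph{multisets}.
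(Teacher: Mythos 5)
Your proof is correct and follows exactly the argument the paper leaves implicit: the paper states this lemma without proof, treating it as an immediate consequence of the definition of subhypergraph isomorphism, and your unfolding of that definition (bijectivity of $\phi$ restricted to $e$, plus label preservation, giving equality of label multisets) is the intended justification. Your remark that injectivity of $\phi$ is what upgrades equality of label sets to equality of label \emph{multisets} is a worthwhile point of care, fully consistent with the paper's definitions of $Sig(\cdot)$ and subhypergraph isomorphism.
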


\subsection{Connectivity Constraint}

Matching individual hyperedges based solely on their signatures is not sufficient to ensure a valid partial embedding.
To check pairwise relationships between hyperedges, we introduce the concept of connectivity constraint.

\begin{lemma}[Connectivity Constraint]
Given a query hypergraph $q$ and a data hypergraph $H$, for any pair of adjacent query hyperedges $e, e' \in E_q$ and data hyperedges $f, f' \in E_H$, a mapping $\{(e,f), (e',f')\}$ can be a partial embedding only if $Sig(e \cap e') = Sig(f \cap f')$.
\end{lemma}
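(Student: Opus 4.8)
The plan is to unpack the definition of a partial embedding and then track how the underlying subhypergraph isomorphism interacts with intersections. Suppose $\{(e,f),(e',f')\}$ is a partial embedding. By definition, there is an injective, label-preserving mapping $\phi$ on the vertices of the partial query induced by $e$ and $e'$ (i.e.\ on $e\cup e'$) with $\phi(e)=f$ and $\phi(e')=f'$. The goal then reduces to relating $Sig(e\cap e')$ to $Sig(f\cap f')$ through $\phi$.

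The key step is to establish the set identity $\phi(e\cap e')=f\cap f'$. First I would prove $\phi(e\cap e')\subseteq f\cap f'$: since $\phi$ is a function, any $v\in e\cap e'$ satisfies $\phi(v)\in\phi(e)=f$ and $\phi(v)\in\phi(e')=f'$, hence $\phi(v)\in f\cap f'$. For the reverse inclusion $f\cap f'\subseteq\phi(e\cap e')$, take any $w\in f\cap f'$. Because $w\in f=\phi(e)$ there is some $u\in e$ with $\phi(u)=w$, and because $w\in f'=\phi(e')$ there is some $u'\in e'$ with $\phi(u')=w$. Here injectivity of $\phi$ is essential: it forces $u=u'$, so this common preimage lies in $e\cap e'$, giving $w\in\phi(e\cap e')$.

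Finally, I would use that $\phi$ restricted to $e\cap e'$ is injective and label-preserving, so it induces a label-preserving bijection between $e\cap e'$ and its image; consequently the two multisets of labels coincide, i.e.\ $Sig(e\cap e')=Sig(\phi(e\cap e'))$. Combining this with $\phi(e\cap e')=f\cap f'$ yields $Sig(e\cap e')=Sig(f\cap f')$, as required.

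I expect the main obstacle — more conceptual than technical — to be the reverse inclusion $f\cap f'\subseteq\phi(e\cap e')$. The forward inclusion holds for any function, but the reverse one can fail without injectivity: distinct query vertices, one in $e$ and one in $e'$, could otherwise be collapsed onto the same data vertex of $f\cap f'$, producing a shared vertex in the data hypergraph with no corresponding shared query vertex. The injectivity guaranteed by the definition of subhypergraph isomorphism is precisely what rules this out, so the entire argument hinges on invoking it at this point.
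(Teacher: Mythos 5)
Your proof is correct. The paper itself states the Connectivity Constraint lemma without a dedicated proof; the closest argument it contains is the forward direction of \Cref{thm:IntersectionSignature} (the Intersection Constraint), whose proof asserts in one line that the underlying injective, label-preserving isomorphism $\phi$ maps $\bigcap_{e\in S} e$ to $\bigcap_{e\in S} M(e)$, from which the signature equality follows; your lemma is precisely the special case $S=\{e,e'\}$ applied to the partial query induced by $\{e,e'\}$. Your argument takes essentially this same route, but fills in the detail the paper glosses over: you explicitly prove the reverse inclusion $f\cap f'\subseteq \phi(e\cap e')$ using injectivity of $\phi$, which is exactly the step that would fail for a non-injective vertex mapping (two distinct query vertices, one in $e$ and one in $e'$, collapsing onto a single shared data vertex), and without which the restriction of $\phi$ to $e\cap e'$ need not be a bijection onto $f\cap f'$. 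So your write-up is a correct and in fact more complete version of the argument the paper relies on implicitly.
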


\begin{example}
    Consider the hypergraphs in \Cref{fig:hypergraphs} and a hyperedge mapping $M=\{(e_1,f_2), (e_2,f_3)\}$. This mapping satisfies the hyperedge signature constraint as $Sig(e_1)=Sig(f_2)=\{A,A,A,B,B\}$ and $Sig(e_2)=Sig(f_3)=\{A,A,A\}$. However, it fails to meet the connectivity constraint because $Sig(e_1\cap e_2)=\{A,A\}$ while $Sig(f_2\cap f_3)=\{A\}$. Therefore, $M$ is not a valid partial embedding.
\end{example}

\subsection{Intersection Constraint}
\label{subsec:IntersectionConstraint}

While the hyperedge signature constraint ensures valid individual hyperedge mappings and the connectivity constraint checks pairwise relationships between hyperedges, these are not sufficient to guarantee a valid embedding as they fail to capture the complex connectivity patterns involving three or more hyperedges.

\begin{restatable}[Intersection Constraint]{theorem}{ThmIntersectionConstraint}
\label{thm:IntersectionSignature}
    Given a query hypergraph $q$, a data hypergraph $H$, and a hyperedge mapping $M: E_{q} \to E_H$, $M$ is an embedding if and only if $Sig(\bigcap_{e\in S} e) = Sig(\bigcap_{e\in S} M(e))$ for every subset $S\subseteq E_{q}$.
\end{restatable}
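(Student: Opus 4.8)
The plan is to prove both directions, with the forward (``only if'') direction being routine and the backward (``if'') direction carrying the real content. For the forward direction, suppose $M$ is an embedding, realized by a subhypergraph isomorphism $\phi: V_q \to V_H$ with $M(e) = \phi(e)$ for every $e$. The key fact is that an injective map commutes with intersections, so for any nonempty $S \subseteq E_q$ we have $\bigcap_{e \in S} M(e) = \bigcap_{e \in S} \phi(e) = \phi\left(\bigcap_{e \in S} e\right)$. Since $\phi$ is injective and label-preserving, its restriction to any vertex set $W$ is a label-preserving bijection onto $\phi(W)$, hence $Sig(\phi(W)) = Sig(W)$. Applying this with $W = \bigcap_{e \in S} e$ yields $Sig\left(\bigcap_{e \in S} M(e)\right) = Sig\left(\bigcap_{e \in S} e\right)$, as required.

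For the backward direction, I would first encode the incidence structure by grouping vertices according to which hyperedges contain them. For each nonempty $S \subseteq E_q$, define the query region $R_q(S) = \{v \in V_q : \{e \in E_q : v \in e\} = S\}$ and, on the data side, $R_H(S) = \{w \in V_H : \{e \in E_q : w \in M(e)\} = S\}$. These regions partition $V_q$ and $\bigcup_{e} M(e)$ respectively, and by definition a vertex lies in $\bigcap_{e \in S} e$ exactly when its incidence set is some superset of $S$; thus $\bigcap_{e \in S} e = \bigsqcup_{T \supseteq S} R_q(T)$ and likewise $\bigcap_{e \in S} M(e) = \bigsqcup_{T \supseteq S} R_H(T)$.

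The crux is then to pass from the hypothesis on intersections to an equality on the finer regions, and to assemble the isomorphism. Fixing a label $\ell$ and letting $f(S)$ and $g(S)$ count label-$\ell$ vertices in $\bigcap_{e \in S} e$ and in $R_q(S)$ respectively, the disjoint-union identity reads $f(S) = \sum_{T \supseteq S} g(T)$, with the analogous relation on the data side. By Möbius inversion on the Boolean lattice in the superset direction, $g(S) = \sum_{T \supseteq S}(-1)^{|T \setminus S|} f(T)$, which for nonempty $S$ references only nonempty supersets $T$; since the hypothesis gives $f^q(T) = f^H(T)$ for all such $T$, we obtain $Sig(R_q(S)) = Sig(R_H(S))$ for every nonempty $S$. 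This equality furnishes a label-preserving bijection $\phi_S : R_q(S) \to R_H(S)$ for each $S$, and because the query regions partition $V_q$ while the data regions are pairwise disjoint, their union $\phi$ is an injective, label-preserving map $V_q \to V_H$. Finally, a query vertex lies in $e$ iff its incidence set contains $e$, so $e = \bigsqcup_{S \ni e} R_q(S)$ and hence $\phi(e) = \bigsqcup_{S \ni e} R_H(S) = M(e)$, the last equality holding because a data vertex lies in $M(e)$ iff its incidence set contains $e$. Thus $\phi$ is a subhypergraph isomorphism realizing $M$, so $M$ is an embedding.

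I expect the backward direction, and specifically the inclusion-exclusion step, to be the main obstacle: the hypothesis constrains only the ``cumulative'' intersection signatures, whereas constructing $\phi$ requires control of the exact-incidence regions, and Möbius inversion over the subset lattice is what bridges this gap. One must also take care that for nonempty $S$ the inversion never invokes the empty-set term (whose intersection would be the whole vertex universe rather than a genuine subset), so that the entire argument remains within the constrained regions; note in passing that equal signatures already force $M$ to be injective on hyperedges, so no inconsistency arises in the assembly of $\phi$.
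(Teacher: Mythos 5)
Your proof is correct, and the forward direction coincides with the paper's (the paper likewise observes that the underlying isomorphism $\phi$ carries $\bigcap_{e\in S} e$ onto $\bigcap_{e\in S} M(e)$, hence preserves the signature). The backward direction, however, takes a genuinely different route. The paper constructs $\phi$ by a downward induction on $|S|$: processing subsets from size $|E_q|$ down to $1$, it matches the \emph{unmapped} vertices of $\bigcap_{e\in S} e$ to the unmapped vertices of $\bigcap_{e\in S} M(e)$, arguing that already-mapped vertices are in label-preserving correspondence so the unmapped residues have equal signatures. You instead pass directly to the exact-incidence cells $R_q(S)$, $R_H(S)$ and prove $Sig(R_q(S))=Sig(R_H(S))$ in closed form by M\"{o}bius inversion over the superset lattice, then glue the resulting bijections. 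The two proofs rest on the same decomposition (your regions are exactly the sets of vertices the paper's induction maps at step $S$), but your inversion argument buys two things: it makes explicit the cell-level equivalence that the paper states separately as \Cref{thm:HILCequivalence} (where it is proved \emph{from} the intersection constraint via inclusion--exclusion, i.e., in the opposite direction), and it sidesteps the bookkeeping in the paper's induction, where the claim that mapped vertices of $\bigcap_{e\in S} e$ correspond precisely to mapped vertices of $\bigcap_{e\in S} M(e)$ is asserted rather than spelled out. What the paper's induction buys in exchange is elementarity: it never invokes lattice M\"{o}bius inversion, only repeated counting of labelled vertices. Your cautions about excluding $S=\emptyset$ and about injectivity of $M$ are both well placed and correct; the paper leaves both implicit.
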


\begin{example}
    Consider the hypergraphs in \Cref{fig:hypergraphs} and a hyperedge mapping $M = \{(e_1, f_1), (e_2,f_3), (e_3, f_8), (e_4,f_4)\}$ shown in \Cref{subfig:wrongmapping}. This mapping satisfies both the hyperedge signature constraint for each hyperedge and the connectivity constraint for every pair of adjacent hyperedges. For instance, $Sig(e_1 \cap e_2) = Sig(f_1 \cap f_3) = \{A,A\}$. However, $Sig(e_1 \cap e_2 \cap e_4)=\{A\}$ as it includes only one vertex $u_2$, while $Sig(f_1 \cap f_3 \cap f_4)=\{A,A\}$ as it includes two vertices $v_1$ and $v_2$. Thus, $M$ is not a valid embedding.
\end{example}

Unlike previous constraints, the intersection constraint provides a necessary and sufficient condition for a mapping to be an embedding. This stronger constraint allows us to precisely characterize valid embeddings.
Building on this, we introduce the concept of $M$-compatibility for candidates.

\begin{definition}
Let $M$ be a partial embedding for a partial query $q'$ of $q$ and let $e \in E_q \setminus E_{q'}$ be an unmapped query hyperedge. A candidate $f$ for $e$ is \emph{$M$-compatible} if $M \cup \{(e, f)\}$ is a partial embedding for the partial query induced by $E_{q'} \cup \{e\}$.
\end{definition}

\section{Candidate Hyperedge Space}
\label{sec:CandidateHyperedgeSpace}

\begin{figure}
    \centering
    \begin{subfigure}{0.44\linewidth}
        \centering
        \includegraphics[width=\linewidth,trim={7mm 0 7mm 0}]{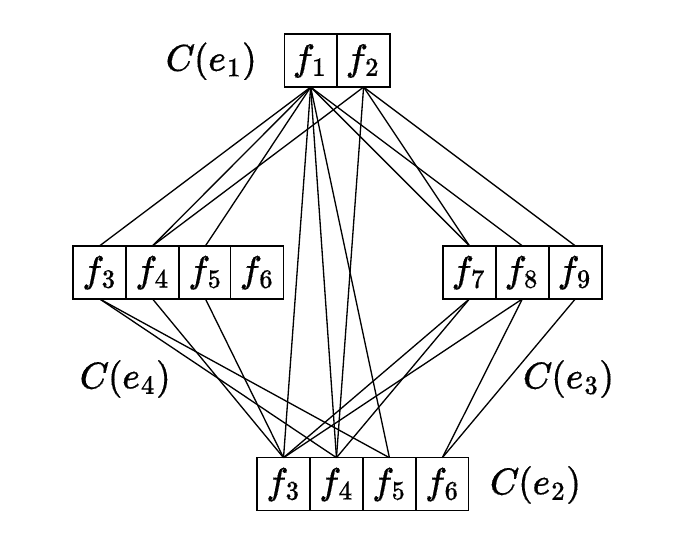}
        \caption{Initial CHS}
        \label{subfig:InitialCHS}
    \end{subfigure}
    \begin{subfigure}{0.44\linewidth}
        \centering
        \includegraphics[width=\linewidth,trim={7mm 0 7mm 0}]{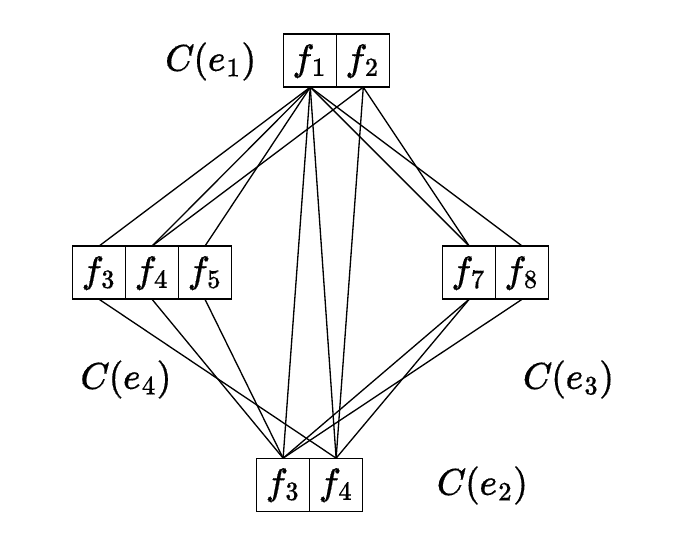}
        \caption{Filtered CHS}
        \label{subfig:RefinedCHS}
    \end{subfigure}
    \caption{Candidate hyperedge space on $q$ and $H$ in \Cref{fig:hypergraphs} before and after filtering}
    \label{fig:FilteringProcess}
\end{figure}

\subsection{Building Candidate Hyperedge Space}
\begin{definition}
Given a query hypergraph $q$ and a data hypergraph $H$, a \emph{Candidate Hyperedge Space (CHS)} on $q$ and $H$ consists of the candidate hyperedge set $C(e)$ for each $e\in E_q$ and connections between the candidates as follows.
\begin{enumerate}[leftmargin=*]
    \item For each $e\in E_q$, candidate hyperedge set $C(e)$ is a set of hyperedges in $H$ that $e$ can be mapped to.
    \item There is a connection between $f\in C(e)$ and $f'\in C(e')$ if $e$ and $e'$ are adjacent in $q$, $f$ and $f'$ are adjacent in $H$, and $Sig(e\cap e') = Sig(f\cap f')$.
\end{enumerate}
\end{definition}

We initialize the CHS as follows. For each query hyperedge $e$, initial candidate hyperedge set $C_{ini}(e)$ is defined as the set of data hyperedges $f$ which satisfy $Sig(f)=Sig(e)$.
We then establish connections between candidate hyperedges. For each pair of adjacent query hyperedges $e$ and $e'$, we create a connection between $f \in C(e)$ and $f' \in C(e')$ if $f$ and $f'$ are adjacent and $Sig(e \cap e') = Sig(f \cap f')$.

\begin{example}
    \Cref{subfig:InitialCHS} shows the initial CHS $C_{ini}$ for query hypergraph $q$ and data hypergraph $H$ in \Cref{fig:hypergraphs}. For $e_1$, $Sig(e_1) = \{A,A,A,B,B\}$. As $f_1$ and $f_2$ have the same signature as $e_1$, $C_{ini}(e_1) = \{f_1, f_2\}$. Similarly, candidate hyperedge sets for other query hyperedges include data hyperedges with matching signatures.

    In $q$, $e_1$ and $e_2$ are adjacent with $Sig(e_1 \cap e_2) = \{A,A\}$. Candidates for $e_1$ and $e_2$ have a connection if their intersections match this signature. For instance:
    \begin{itemize}[leftmargin=*]
        \item $f_1\in C(e_1)$ and $f_3\in C(e_2)$ are connected as $Sig(f_1 \cap f_3) = \{A,A\}$,
        \item $f_1\in C(e_1)$ and $f_6\in C(e_2)$ are not connected as $f_1 \cap f_6 = \emptyset$,
        \item $f_2\in C(e_1)$ and $f_3\in C(e_2)$ are not connected as $Sig(f_2 \cap f_3) = \{A\}$.
    \end{itemize}
    Note that $e_4$ and $e_3$ are not adjacent in $q$, so their candidates do not have connections in the CHS, regardless of their intersections in $H$.
\end{example}

All hyperedges constituting an embedding are present in the CHS. We call this property of CHS as \textit{complete}.

The number of candidates in CHS is $O(\abs{E_q}\abs{E_H})$, and the number of connections is $O(\abs{E_q}^2\abs{E_H}^2)$ when all hyperedges in the query hypergraph and all hyperedges in the data hypergraph are adjacent to each other, respectively.
But, in practice these numbers are much smaller.

We define adjacent candidate set $C(e'\mid e,f)$ as the set of hyperedges in $H$ that $e'$ can be mapped to when $e$ is mapped to $f$. That is, $C(e'\mid e,f)$ is the set of $f'\in C(e')$ which is connected to $f\in C(e)$.
    
\subsection{Filtering by Connectivity Constraint}

\begin{algorithm}[t]
\setstretch{0.93}
    \caption{\textsc{Filtering($q, C$)}}
    \label{Alg:DefaultFiltering}
    \rm
    \SetKwProg{Fn}{Function}{:}{}
    \SetKwInOut{Input}{Input}
    \SetKwInOut{Output}{Output}
    \Input{Query hypergraph $q$ and CHS $C$}
    Initialize queue $Q$ with all pairs of adjacent query hyperedges\\
    \While{\rm $Q$ is not empty} {
        $(e, e')$ $\gets$ Q.pop()\\
        $removed \gets false$\\
        \ForEach{$f\in C(e)$}{
            \If{\rm $C(e'\mid e,f) = \emptyset$} {
                Remove $f$ from $C(e)$\\
                $removed \gets true$\\
            }
        }
        \If{removed}{
            \ForEach{$e''$ adjacent to $e$}{
                \If{$e'' \neq e' \And (e'',e) \notin Q$}{
                    $Q.push((e'', e))$
                }
            }
        }
    }
\end{algorithm}

Before the matching phase, we apply the connectivity constraint to the initial CHS as a filtering method to remove candidate hyperedges that cannot be included in an embedding.
For a query hyperedge $e$ to be mapped to a candidate $f$, $f$ must have at least one connected candidate in $C(e')$ for each query hyperedge $e'$ adjacent to $e$. Formally, $C(e' \mid e,f) \neq \emptyset$ should hold for every $e'$ adjacent to $e$.

In \Cref{Alg:DefaultFiltering}, we iteratively remove candidates that fail to meet the connectivity constraint until the CHS contains no candidates that violate the constraint.
The algorithm begins by initializing a queue with all pairs of adjacent query hyperedges. Then it iterates through these pairs $(e, e')$ in the queue, examining the candidates for hyperedge $e$. For each candidate $f \in C(e)$, if $f$ has no connected candidates in $C(e')$ (i.e., $C(e'\mid e,f)$ is empty), $f$ is removed from $C(e)$.
When candidates are removed from $C(e)$, the algorithm propagates these changes by adding new query hyperedge pairs $(e'', e)$ to the queue for $e''$ adjacent to $e$. The algorithm terminates when the queue is empty, indicating that no further filtering is possible.

\begin{example}
    \Cref{fig:FilteringProcess} shows the initial CHS and the CHS after filtering by connectivity constraint.
    In the initial CHS, $f_6\in C(e_4)$ has no connection to candidates of $e_1$ (and also $e_2$), so we remove $f_6$ from $C(e_4)$. Similarly, $f_5\in (e_2)$ has no connection to candidates of $e_3$, so we remove $f_5$ from $C(e_2)$. $f_6\in C(e_2)$ has no connection to candidates of $e_1$, so we remove $f_6$ from $C(e_2)$. After removing $f_6$ from $C(e_2)$, $f_9\in C(e_3)$ loses its only connection to $e_2$'s candidates, so we remove $f_9$ from $C(e_3)$.
    In the filtered CHS, every remaining candidate has at least one connection to candidates of every adjacent query hyperedge.
\end{example}

\begin{restatable}[]{theorem}{ThmFilteringtIme}
\label{thm:FilteringtIme}
     \Cref{Alg:DefaultFiltering} takes $O((\sum_e\abs{C(e)})^2)$ time.
\end{restatable}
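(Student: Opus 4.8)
The plan is to treat \Cref{Alg:DefaultFiltering} as an arc-consistency style constraint propagation and bound its running time as (the number of times each ordered pair of adjacent hyperedges is popped) times (the cost of processing one pair). Write $N=\sum_e\abs{C(e)}$ for the total number of candidates, measured at initialization, and observe that candidate sets only shrink during execution. First I would pin down the data structure so that the emptiness test $C(e'\mid e,f)=\emptyset$ costs $O(1)$: for each candidate $f\in C(e)$ and each query hyperedge $e'$ adjacent to $e$, maintain a counter of the surviving connections from $f$ into $C(e')$, and test whether this counter is zero. With this in place, a single processing of a pair $(e,e')$ scans $C(e)$ at $O(1)$ per candidate, so it costs $O(\abs{C(e)})$, plus $O(1)$ work per hyperedge adjacent to $e$ when a removal triggers the re-enqueue step.

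The first key step is to bound how often a fixed pair $(e,e')$ is popped. It is enqueued once at initialization, and is re-enqueued only as a pair whose second component is $e'$, which the algorithm does exactly when some candidate is removed from $C(e')$. Since candidates are never reinserted, the number of such ``removal events'' at $C(e')$ is at most $\abs{C(e')}$, and the guard $(e'',e)\notin Q$ guarantees that each removal event re-enqueues $(e,e')$ at most once (a second attempt while $(e,e')$ is already queued is suppressed). Hence $(e,e')$ is processed at most $1+\abs{C(e')}$ times, with $\abs{C(e')}$ its initial value.

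The second key step is to sum the costs. Bounding the cost of each processing by the initial $\abs{C(e)}$, the total work is
\[
\sum_{(e,e')\text{ adjacent}}\bigl(1+\abs{C(e')}\bigr)\,\abs{C(e)}\;\le\;\sum_{e,e'}\abs{C(e)}\;+\;\sum_{e,e'}\abs{C(e)}\,\abs{C(e')}\;=\;N\,\abs{E_q}+N^2 .
\]
The second term is exactly $\bigl(\sum_e\abs{C(e)}\bigr)^2=N^2$, and whenever every query hyperedge has at least one candidate we have $\abs{E_q}\le N$, so $N\abs{E_q}\le N^2$ and the total is $O(N^2)$. I would then confirm the remaining bookkeeping stays within budget: removing a candidate decrements the counters of precisely its incident CHS connections, and the re-enqueue loops touch one entry per adjacent hyperedge, so both are charged against the $O(N^2)$ connections (respectively against the $N\abs{E_q}=O(N^2)$ total removal-event/neighbor incidences).

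The main obstacle is the first, linear term $\sum_{(e,e')\text{ adjacent}}\abs{C(e)}=\sum_e\abs{C(e)}\cdot(\text{number of hyperedges adjacent to }e)$, which is subsumed by $N^2$ only under $\abs{E_q}=O(N)$. This corner case is benign: an empty $C(e)$ means no embedding exists, so such hyperedges can be discarded beforehand, leaving $\abs{E_q}\le N$. The genuinely delicate point to state carefully is the deduplication argument from the first step, namely that the $(e'',e)\notin Q$ guard forces every re-processing of a pair to be charged against a \emph{distinct} removal event; this is precisely what holds the processing count at $1+\abs{C(e')}$ and prevents the propagation from cascading beyond the $O(N^2)$ budget.
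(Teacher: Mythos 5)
Your proof is correct and takes essentially the same approach as the paper's: bound the number of times each adjacent pair $(e,e')$ is popped by $O(1+\abs{C(e')})$ (one initial push plus one push per removal event at $C(e')$, since candidates are never reinserted), charge $O(\abs{C(e)})$ per pop, and sum over pairs to get $O\bigl((\sum_e\abs{C(e)})^2\bigr)$. The additional details you work out---the connection counters giving the $O(1)$ emptiness test, the $N\abs{E_q}$ term from the initial pushes (handled via $\abs{E_q}\le N$), and the counter-maintenance cost---are points the paper's terser proof absorbs implicitly into its $O(\cdot)$ bounds, not a different argument.
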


While the number of candidates $\abs{C(e)}$ can reach $\abs{E_H}$ in the worst case, it is typically much smaller than $\abs{E_H}$ in practice.

\section{Match-and-Filter}
\label{sec:MatchAndClean}

\begin{algorithm}[t]
\setstretch{0.93}
    \caption{MatchAndFilter($q,H,C, M$)}
    \label{Alg:Backtracking}
    \rm
    \SetKwProg{Fn}{Function}{:}{}
    \SetKwInOut{Input}{Input}
    \Input{Query hypergraph $q$, data hypergraph $H$, CHS $C$, partial embedding $M$}
    \If{$\exists e \in E_q$ s.t. $C(e) = \emptyset$}{
        \Return \tcp{No valid embedding}
    }
    $e'\gets$ MatchingOrder($q, C, M$)\\
    \ForEach{$f'\in C(e')$}{
        $M'\gets M\cup \{(e',f')\}$\\
        Update $b_q$ and $b_H$\\
        \label{line:updateIHB}
        \If{$\abs{M'}=\abs{E_q}$}{
            Report $M'$\\
        }
        \Else{
            \label{line:StartConn}
            \tcp{Connectivity Constraint}
            \ForEach{unmapped $e\in E_q$ adjacent to $e'$} {
                \ForEach{$f\in C(e)$}{
                    \If{\rm $f\in C(e)$ is not connected to $f'\in C(e')$} {
                        Remove $f$ from $C(e)$\\
                    }
                }
            \label{line:EndConn}
            }
            \tcp{Intersection Constraint}
            \ForEach{unmapped $e\in E_q$} {
                \ForEach{$f\in C(e)$}{
                    CheckIntersection($q,H,C,M',e,f$)\\
                }
            }
            MatchAndFilter($q, H, C, M'$)
        }
        Restore $b_q$ and $b_H$\\
    }
    \Return
\end{algorithm}

After finishing initial filtering described in the previous section, \Cref{Alg:Backtracking} finds subhypergraph embeddings using a dynamic matching order. It extends the partial embedding $M$ to $M\cup \{(e',f')\}$ by mapping a query hyperedge $e'$ to its candidate hyperedge $f'$. After each extension, it applies a two-stage filtering process to maintain $M$-compatibility for all candidates in the candidate hyperedge space.

\begin{enumerate}[leftmargin=*]
    \item Connectivity Constraint Check: A fast, coarse-grained filtering technique that quickly eliminates candidates incompatible with the current partial embedding in $O(1)$ time per candidate (lines 10--13). 
    \item Intersection Constraint Check: A more thorough, fine-grained filtering, ensuring that all remaining candidates are $M$-compatible. As we will show in \Cref{thm:IStIme}, this check takes $O(\abs{e})$ time for each pair of an unmapped query hyperedge $e$ and its candidate $f\in C(e)$ (lines 14--16).
\end{enumerate}

This two-stage process enables early removal of invalid candidates by the quick connectivity constraint check before applying the more expensive intersection constraint check. It achieves improved overall efficiency, when compared to applying only the intersection constraint check, which also obtains the same $M$-compatibility.

If the candidate set of any query hyperedge becomes empty after the two filtering stages, the current partial embedding cannot be extended to a valid embedding, and the algorithm backtracks to try the next candidate. Otherwise, the algorithm continues with the extended partial embedding. 

\subsection{Filtering by Connectivity Constraint}

\begin{figure}
    \centering

    \begin{subfigure}{0.44\linewidth}
        \centering
        \includegraphics[width=\linewidth,trim={7mm 0 7mm 0}]{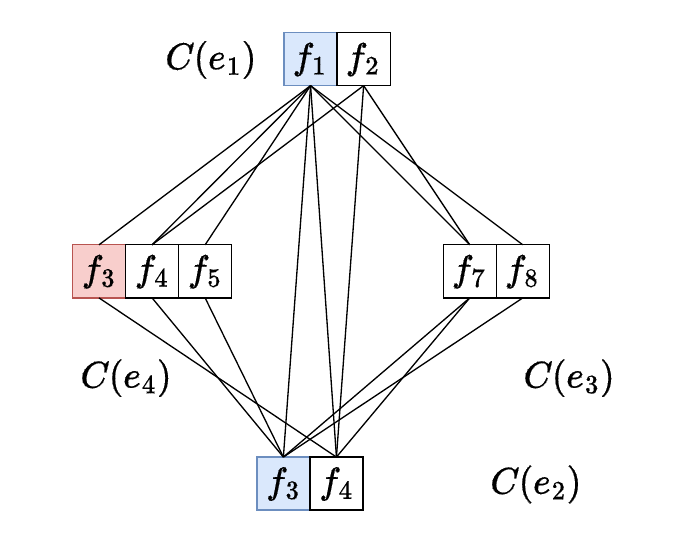}
        \caption{Connectivity constraint}
        \label{subfig:CleanByConnectivity}
    \end{subfigure}
    \begin{subfigure}{0.44\linewidth}
        \centering
        \includegraphics[width=\linewidth,trim={7mm 0 7mm 0}]{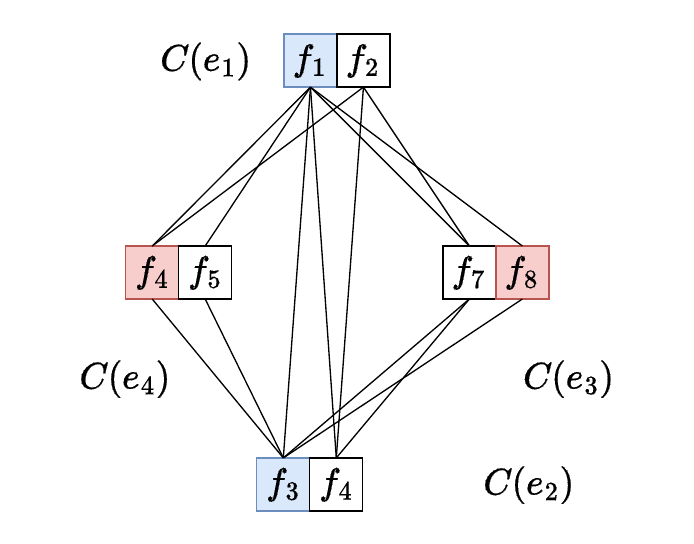}
        \caption{Intersection constraint}
        \label{subfig:CleanByIntersection}
    \end{subfigure}
    \caption{Match-and-Filter applied to CHS on $q$ and $H$ in \Cref{fig:hypergraphs}. Current partial embedding $M$ is $\{(e_1, f_1), (e_2, f_3)\}$ (blue). Incompatible candidates (red) are filtered by connectivity constraint in (a) and intersection constraints in (b).}
    \label{fig:MappingProcess}
\end{figure}

Lines 10--13 of \Cref{Alg:Backtracking} are to apply the connectivity constraint to the candidates in CHS after a new hyperedge pair $(e',f')$ is added to the partial embedding $M$. 

The algorithm iterates through all unmapped query hyperedges $e$ that are adjacent to the newly mapped hyperedge $e'$. For each such $e$, it examines $C(e)$ and removes any candidate $f$ that is not in $C(e\mid e',f')$.
For each candidate $f\in C(e)$, it takes $O(1)$ time to check whether $f\in C(e)$ is connected to the newly mapped hyperedge pair $(e', f')$, as we already built connections in the CHS.

\begin{example}
    Consider \Cref{subfig:CleanByConnectivity} with a partial embedding $M=\{(e_1, f_1), (e_2, f_3)\}$, where $(e_2, f_3)$ is a newly mapped hyperedge pair. We remove invalid candidates from the candidate sets of unmapped query hyperedges by the connectivity constraint.
    $f_3$ is removed from $C(e_4)$ because there is no connection between the newly mapped pair $(e_2, f_3)$ and the candidate $f_3\in C(e_4)$. In contrast, $f_4$ and $f_5$ remain in $C(e_4)$ as they have connections to $f_3\in C(e_2)$.
\end{example}

\subsection{Filtering by Intersection Constraint}
\label{subsec:CleaningByIntersection}

\begin{algorithm}[t]
\setstretch{0.93}
    \caption{\textsf{CheckIntersection($q, H, C, M, e, f$)}}
    \label{Alg:CheckIntersectionSignature}
    \rm
    \SetKwInOut{Input}{Input}
    \SetKwProg{Fn}{Function}{:}{}   
    
    \Input{Query $q$, data $H$, CHS $C$, partial embedding $M$, query hyperedge $e$, data hyperedge $f$}
    \ForEach{$u\in e$}{
        $b_q[u] \gets b_q[u] + 2^{pos(e)}$\\
    }
    \ForEach{$v\in f$}{
        $b_H[v] \gets b_H[v] + 2^{pos(e)}$\\
    }
    $S_q^+\gets [(b_{q}[u], L_q[u]) \mid u \in e]$\\
    $S_H^+\gets [(b_{H}[v], L_H[v]) \mid v \in f]$\\
    
    \If{$sort(S_q^+) \neq sort(S_H^+)$}{
        Remove $f$ from $C(e)$\\
    }
    Restore IHBs\\
\end{algorithm}

\Cref{Alg:CheckIntersectionSignature} is to apply the intersection constraint to the candidate $f\in C(e)$ for the partial embedding $M$.
It filter out the candidate from CHS if it is not $M$-compatible.
To check the $M$-compatibility of a candidate, we use \Cref{thm:IntersectionSignature}, which states that an embedding is valid if and only if, for every subset $S$ of query hyperedges, the signature of the intersection of $S$ matches the signature of the intersection of the corresponding data hyperedges for $S$. However, computing the signature for the intersection of every subset can take exponential time.

\begin{figure}[t]
    \centering
    \begin{subfigure}{0.32\linewidth}
    \centering
        \begin{tikzpicture}
            \begin{scope}
                \clip (0,0) circle (0.8);
                \fill[blue!10] (1,0) circle (0.8);
            \end{scope}
            \begin{scope}
                \clip (0,0) circle (0.8);
                \clip (0.5, 0.86) circle (0.8);
                \fill[blue!40] (1,0) circle (0.8);
            \end{scope}
            \draw (0,0) circle (0.8) node [below left] {$e_2$};
            \draw (1,0) circle (0.8) node [below right] {$e_3$};
            \draw (0.5,0.86) circle (0.8) node [above] {$e_1$};
        \end{tikzpicture}
        \caption{Intersection $e_2 \cap e_3$}
        \label{subfig:SignatureExample}
    \end{subfigure}
    \begin{subfigure}{0.32\linewidth}
    \centering
        \begin{tikzpicture}
            \begin{scope}
                \clip (0,0) circle (0.8);
                \fill[blue!10] (1,0) circle (0.8);
            \end{scope}
            \begin{scope}
                \clip (0,0) circle (0.8);
                \clip (0.5,0.86) circle (0.8);
                \fill[white] (1,0) circle (0.8);
            \end{scope}
            \draw (0,0) circle (0.8) node [below left] {$e_2$};
            \draw (1,0) circle (0.8) node [below right] {$e_3$};
            \draw (0.5,0.86) circle (0.8) node [above] {$e_1$};
        \end{tikzpicture}
        \caption{Cell of $\{e_2, e_3\}$}
        \label{subfig:CellExample}
    \end{subfigure}
    \begin{subfigure}{0.32\linewidth}
    \centering
        \begin{tikzpicture}
            \begin{scope}
                \clip (0,0) circle (0.8);
                \fill[blue!10] (1,0) circle (0.8);
            \end{scope}
            \begin{scope}
                \clip (0,0) circle (0.8);
                \clip (0.5,0.86) circle (0.8);
                \fill[white] (1,0) circle (0.8);
            \end{scope}
            \draw (0,0) circle (0.8) node [below left] {$f_1$};
            \draw (1,0) circle (0.8) node [below right] {$f_3$};
            \draw (0.5,0.86) circle (0.8) node [above] {$f_2$};
        \end{tikzpicture}
        \caption{Cell of $\{f_1, f_3\}$}
        \label{subfig:CellExample_data}
    \end{subfigure}
    \caption{Intersection and cells of hyperedges for the partial embedding $\{(e_1, f_2), (e_2, f_1), (e_3,f_3)\}$}
    \label{fig:IntersectionAndCell}
\end{figure}

To address this, we shift our focus from intersections to \emph{cells}. 
Consider a set $E$ of query hyperedges. For a subset $S$ of $E$, the cell of $S$ is defined as the set of vertices which are only incident to hyperedges in $S$ and not incident to hyperedges in $E\setminus S$.
For example, in \Cref{subfig:SignatureExample}, the intersection of $e_2$ and $e_3$ can be divided into two cells: the cell of $\{e_1,e_2,e_3\}$, representing $e_1 \cap e_2 \cap e_3$ (in dark blue), and the cell of $\{e_2, e_3\}$, representing $e_2 \cap e_3 \setminus e_1$ (in light blue).

A cell can be represented by a bitmap, where each bit position corresponds to a hyperedge in $E$. In \Cref{fig:IntersectionAndCell}, let's assume that  $pos(e_1)$ (i.e., bit position of $e_1$) is 0 (rightmost position), $pos(e_2)=1$, and $pos(e_3)=2$.
In the bitmap of a cell $S$, the bit position of a hyperedge $e$ has 1 if and only if $e$ is in $S$, i.e., the colored cell in \Cref{subfig:CellExample} is represented as the bitmap 110. (Note that the bitmap is the bit representation of integer $\sum_{e \in S} 2^{pos(e)}$.)

For a query hyperedge set $S$ and its corresponding data hyperedge set $M(S)$ under a partial embedding $M$, the cell of $M(S)$ is represented by the same bitmap as the cell of $S$. 
For example, in \Cref{subfig:CellExample_data}, the cell of $\{f_1, f_3\}$ is represented by the bitmap 110,
where the mapping is $\{(e_1, f_2), (e_2, f_1), (e_3,f_3)\}$.

\begin{restatable}{theorem}{ThmHILCequivalence}
\label{thm:HILCequivalence}
    Given a partial query $q'$ of $q$ and hyperedge mapping $M: E_{q'} \to E_H$, $M$ is a partial embedding
    if and only if the signature of a cell $S$ matches the signature of $M(S)$ for every subset of query hyperedges $S\in E_{q'}$.
\end{restatable}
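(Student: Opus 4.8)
The plan is to reduce the cell-based characterization to the intersection-based one already established in \Cref{thm:IntersectionSignature}, and then exploit the fact that intersections and cells carry exactly the same information via an inclusion--exclusion (M\"obius) relation. First I would observe that a partial embedding $M$ of $q'$ in $H$ is precisely an embedding of the hypergraph $q'$, viewing $q'$ as a query in its own right with vertex set $\bigcup_{e\in E_{q'}} e$. Hence \Cref{thm:IntersectionSignature}, applied to $q'$ in place of $q$, already tells us that $M$ is a partial embedding if and only if $Sig(\bigcap_{e\in S} e) = Sig(\bigcap_{e\in S} M(e))$ for every $S\subseteq E_{q'}$. It therefore suffices to prove that this family of intersection-signature equalities is equivalent to the family of cell-signature equalities $Sig(\mathrm{cell}(S)) = Sig(\mathrm{cell}(M(S)))$ over all $S\subseteq E_{q'}$.

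The key structural fact I would establish is that, over the fixed hyperedge set $E_{q'}$, the cells partition the vertices and every intersection decomposes as a disjoint union of cells: $\bigcap_{e\in S} e = \bigsqcup_{T\colon S\subseteq T\subseteq E_{q'}} \mathrm{cell}(T)$, because a vertex lies in $\bigcap_{e\in S} e$ exactly when the set of hyperedges of $E_{q'}$ incident to it contains $S$. Passing to signatures, which add under disjoint union, this yields $Sig(\bigcap_{e\in S}e) = \sum_{T\supseteq S} Sig(\mathrm{cell}(T))$, and the identical identity holds on the data side over $M(E_{q'})$. Here I would use that the hyperedge mapping induced by a subhypergraph isomorphism is injective, since distinct query hyperedges map to distinct data hyperedges as $\phi$ is injective on vertices; thus $S\mapsto M(S)$ is an inclusion-preserving bijection between subsets of $E_{q'}$ and subsets of $M(E_{q'})$, so the bitmap indexing of cells agrees on the two sides.

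With these two triangular relations in hand, the equivalence of the two families is immediate in one direction and an induction in the other. For \emph{cells match $\Rightarrow$ intersections match}, I would simply sum: if $Sig(\mathrm{cell}(T)) = Sig(\mathrm{cell}(M(T)))$ for all $T$, then summing over $T\supseteq S$ gives $Sig(\bigcap_{e\in S} e) = Sig(\bigcap_{e\in S} M(e))$. For the converse I would invert the relation by downward induction on $\abs{S}$: the top cell $\mathrm{cell}(E_{q'})$ equals the full intersection $\bigcap_{e\in E_{q'}} e$, so its signature matches by hypothesis, and for smaller $S$ the cell signature is recovered from the intersection signature by removing the signatures of all strictly larger cells, $Sig(\mathrm{cell}(S)) = Sig(\bigcap_{e\in S} e) - \sum_{T\supsetneq S} Sig(\mathrm{cell}(T))$, where the subtraction is a genuine multiset difference because the larger cells are disjoint subsets of the intersection. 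Since the right-hand side already matches term by term on the query and data sides by the induction hypothesis, so does $Sig(\mathrm{cell}(S))$.

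I expect the main obstacle to be the bookkeeping of the multiset (M\"obius) inversion rather than any single hard idea: one must verify that the inclusion--exclusion subtraction is always well-defined over multisets, which is guaranteed here by disjointness of cells, and that the subset lattice on $E_{q'}$ is carried faithfully to the data side by the injective hyperedge mapping so that the query and data recurrences have identical shape. Once this correspondence is pinned down, combining the equivalence with the reduction to \Cref{thm:IntersectionSignature} completes the proof.
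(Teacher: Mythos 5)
Your proposal is correct and follows essentially the same route as the paper's proof: both directions reduce to \Cref{thm:IntersectionSignature} via the disjoint decomposition of each intersection $\bigcap_{e\in S} e$ into the cells of all supersets $T\supseteq S$, summing cell signatures to get intersection signatures in one direction and inverting by inclusion--exclusion in the other. The paper's version is terser (it invokes ``the inclusion--exclusion principle'' without spelling out the downward induction or the multiset bookkeeping), so your write-up is simply a more detailed rendering of the same argument.
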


OHMiner \cite{OHMiner} gives an observation which is similar to the above theorem. However, our approach for verification is completely different from OHMiner. While OHMiner computes set intersections repeatedly for vertex sets to verify a candidates, which incurs significant computational overhead, our approach does not utilize set intersections.

Instead, we introduce two new auxiliary data structures, the \emph{Incident Hyperedge Bitmap} (IHB) and the \emph{Cell Signature} to compute the signatures of cells. These structures enable us to perform $M$-compatibility checks for each candidate $f\in C(e)$ in $O(\abs{e})$ time, where $\abs{e}$ is the arity of the query hyperedge $e$.

\begin{definition}[Incident Hyperedge Bitmap]
For a vertex $u \in V_q$, the Incident Hyperedge Bitmap $b_q^M(u)$ is the bitmap of the cell that $u$ belongs to.
Similarly, for a vertex $v \in V_H$, IHB $b_H^M(v)$ is the bitmap of the cell that $v$ belongs to.
\end{definition}
Both $b_q^M$ and $b_H^M$ can be computed incrementally when a new mapping $(e,f)$ is added to a partial embedding $M$.

\begin{example}
    Consider the hypergraphs in \Cref{fig:hypergraphs}, a hyperedge mapping $M = \{(e_1, f_1), (e_2,f_3), (e_3, f_8), (e_4,f_4)\}$ shown in \Cref{subfig:wrongmapping}, and bit positions of $e_1, e_2, e_3, e_4$ are $0,1,2,3$, respectively.
    \begin{itemize}[leftmargin=*]
        \item $b_q^M(u_1) = 0111$ because $u_1$ is in the cell of $\{e_1, e_2, e_3\}$.
        \item $b_q^M(u_2) = 1011$ because $u_2$ is in the cell of $\{e_1, e_2, e_4\}$.
        \item $b_q^M(u_3) = 1001$ because $u_3$ is in the cell of $\{e_1, e_4\}$.
        \item $b_H^M(v_1) = 1011$ because $v_1$ is in the cell of $\{f_1, f_3, f_4\}$, whose elements are mapped to $e_1, e_2, e_4$ respectively.
        \item $b_H^M(v_2) = 1011$ because $v_2$ is in the cell of $\{f_1, f_3, f_4\}$, whose elements are mapped to $e_1, e_2, e_4$ respectively.
        \item $b_H^M(v_3) = 0101$ because $v_3$ is in the cell of $\{f_1, f_8\}$, whose elements are mapped to $e_1, e_3$ respectively.
        \end{itemize}
\end{example}

In addition, we introduce Cell Signatures, $\I_q^M(b, l)$ and $\I_H^M(b, l)$.
Given a partial query $q'$ of $q$ and hyperedge mapping $M: E_{q'}\to E_H$, 
$\I_q^M(b, l)$ (resp. $\I_H^M(b, l)$) stores the signature of  the cell in the domain of $M$ (resp. the image of $M$) represented by bitmap $b$ by counting the number of vertices with label $l$ in the cell.

\begin{example}
        Consider the hypergraphs in \Cref{fig:hypergraphs}, a hyperedge mapping $M = \{(e_1, f_1), (e_2,f_3), (e_3, f_8), (e_4,f_4)\}$ shown in \Cref{subfig:wrongmapping}.
        \begin{itemize}[leftmargin=*]
            \item $\I_q^M(1011,A)=1$ since $u_2$ is the only vertex whose IHB is $1011$ and label is $A$.
            \item $\I_q^M(1001,A)=1$ since $u_3$ is the only vertex whose IHB is $1001$ and label is $A$. Although $u_2$ is also in $e_1\cap e_4$, $u_2$ is incident to $e_2$ as well, which makes its IHB $1011$.
            \item $\I_H^M(1011,A)=2$ since there are two vertices $v_1$ and $v_2$ whose IHB is $1011$ and label is $A$.
        \end{itemize}
\end{example}

For a partial embedding $M$, to check $M$-compatibility of a candidate $f\in C(e)$, it is sufficient to compare the signatures of cells by \Cref{thm:HILCequivalence}. Furthermore, since $M$ is already a valid partial embedding, we only have to check signatures of cells that have changed due to the addition of mapping $(e,f)$.
Thus, instead of checking all entries in $\I_q^{M'}=\I_H^{M'}$ for $M'=M\cup\{(e,f)\}$, we only need to consider the signatures of the cells that are subsets of $e$.

\begin{restatable}{theorem}{ThmISComplexity}
\label{thm:IScomplexity}
Let $M: E_{q'} \to E_H$ be a partial embedding for a partial query $q'$ of $q$.   A candidate $f \in C(e)$ for a query hyperedge $e \in E_q \setminus E_{q'}$
is $M$-compatible 
if $\I_q^{M'}(b,l)=\I_H^{M'}(b,l)$ for the bitmap $b$ of every cell $S$ that is a subset of $e$ and every label $l$.
\end{restatable}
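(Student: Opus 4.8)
The plan is to reduce the claim to \Cref{thm:HILCequivalence} and then exploit that $M$ is \emph{already} a valid partial embedding, so that adjoining the single pair $(e,f)$ merely refines the existing cell partition. First I would recall that, by the definition of $M$-compatibility, $f$ is $M$-compatible exactly when $M' = M \cup \{(e,f)\}$ is a partial embedding for the partial query induced by $E_{q'} \cup \{e\}$. Since $\I_q^{M'}(b,l)$ and $\I_H^{M'}(b,l)$ record the cell signatures as label counts, \Cref{thm:HILCequivalence} (applied with ambient hyperedge set $E_{q'} \cup \{e\}$) says that $M'$ is a partial embedding if and only if $\I_q^{M'}(b,l) = \I_H^{M'}(b,l)$ for every cell bitmap $b$ and every label $l$. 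Hence it suffices to show that the hypothesis --- equality for the bitmaps of all cells contained in $e$ --- forces equality for \emph{every} bitmap.

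The key observation I would establish is that, in the ambient set $E_{q'} \cup \{e\}$, a cell is contained in $e$ precisely when its bitmap has the bit at position $pos(e)$ set: the vertices of such a cell are by definition incident to $e$, hence lie in $e$, and conversely. Thus the hypothesis covers exactly the bitmaps with the $e$-bit equal to $1$, and it remains to treat a bitmap $b_0$ with $e$-bit $0$. Writing $b_1 = b_0 + 2^{pos(e)}$ for its sibling with the $e$-bit set, I would use the refinement caused by adding $e$: the cell of $M$ indexed by $b_0$ (over $E_{q'}$, where every vertex has $e$-bit $0$ before the update) splits under $M'$ into the vertices lying in $e$, which acquire bitmap $b_1$, and those outside $e$, which keep bitmap $b_0$. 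Counting per label on each side yields
\[
\I_q^{M}(b_0,l) = \I_q^{M'}(b_0,l) + \I_q^{M'}(b_1,l)
\]
on the query side, and symmetrically
\[
\I_H^{M}(b_0,l) = \I_H^{M'}(b_0,l) + \I_H^{M'}(b_1,l)
\]
on the data side, where the data split is governed by incidence to $f = M'(e)$.

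To finish I would invoke that $M$ itself is a partial embedding: by \Cref{thm:HILCequivalence} with ambient set $E_{q'}$ we have $\I_q^{M}(b_0,l) = \I_H^{M}(b_0,l)$ for all $b_0$ and $l$. Because $b_1$ has its $e$-bit set, the hypothesis gives $\I_q^{M'}(b_1,l) = \I_H^{M'}(b_1,l)$. Subtracting the matching terms from the two decompositions leaves $\I_q^{M'}(b_0,l) = \I_H^{M'}(b_0,l)$ for every $e$-bit-$0$ bitmap $b_0$. Combined with the hypothesis for the $e$-bit-$1$ bitmaps, this establishes equality of $\I_q^{M'}$ and $\I_H^{M'}$ on all cell bitmaps, so \Cref{thm:HILCequivalence} certifies that $M'$ is a partial embedding; that is, $f$ is $M$-compatible.

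I expect the main obstacle to be the careful handling of the change of ambient set when passing from $M$ to $M'$: the cells of $M$ are indexed over $E_{q'}$ whereas those of $M'$ are indexed over $E_{q'} \cup \{e\}$, and the entire argument rests on correctly pairing the query-side split of each old cell (by incidence to $e$) with the data-side split (by incidence to $f$). Once this refinement is pinned down, the conclusion follows by the elementary cancellation above, and the restriction to cells contained in $e$ is precisely what lets the check examine only the vertices of $e$ (and of $f$).
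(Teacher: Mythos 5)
Your proposal is correct and takes essentially the same route as the paper's own proof: both split each old cell of $E_{q'}$ into its part inside $e$ (whose bitmap has the $e$-bit set, covered by the hypothesis) and its part outside $e$ (recovered by subtraction, using that $M$ is already a partial embedding so the old cell signatures agree), then conclude via \Cref{thm:HILCequivalence}. Your write-up merely makes explicit the bitmap bookkeeping ($b_1 = b_0 + 2^{pos(e)}$) and the appeals to \Cref{thm:HILCequivalence} that the paper leaves implicit.
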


\begin{example}
    Consider \Cref{subfig:CleanByIntersection} with a partial embedding $M=\{(e_1, f_1),(e_2, f_3)\}$. We remove invalid candidates the candidate sets of unmapped query hyperedges using the intersection constraint. 
    We check for a potential extension $M'=\{(e_1, f_1),(e_2, f_3),(e_4,f_4)\}$.
    For this extension, we find that $\I_q^{M'}(1011, A) = 1$ while $\I_H^{M'}(1011, A) = 2$.
    This difference in the cell signatures shows that $f_4$ cannot be mapped to $e_4$, so we remove it from $C(e_4)$. Similarly, $f_8$ is removed from $C(e_3)$. After filtering, each of $e_3$ and $e_4$ has only one remaining candidate. Mapping these candidates results in the embedding shown in \Cref{subfig:embedding}.
\end{example}

\Cref{Alg:CheckIntersectionSignature} applies the intersection constraint to the candidate $f\in C(e)$, and it removes $f$ if $f$ is not $M$-compatible.
Although there are $2^{\abs{E_q}}$ bitmaps in the definition of the cell signature $\I_q^{M'}$ for mapping $M'=M\cup \{(e,f)\}$, at most $\abs{e}$ entries of the cell signature are non-zero because the hyperedge $e$ has $\abs{e}$ vertices, and each vertex in $e$ can be included in only one cell.
Thus, instead of computing the cell signature for every bitmap of a cell that is a subset of $e$, we create an array of non-zero entries, i.e., $(b^{M'}_q[u], L_q[u])$ for each vertex $u$ in the query hyperedge. Similarly, we create an array of $(b^{M'}_H[v], L_H[v])$ for each vertex $v$ in the data hyperedge, and check whether these two arrays are equal.

\begin{theorem}\label{thm:IStIme}
    The running time of \Cref{Alg:CheckIntersectionSignature} is $O(\abs{e})$.
\end{theorem}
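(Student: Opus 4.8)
The plan is to traverse \Cref{Alg:CheckIntersectionSignature} block by block and charge each block $O(\abs{e})$. The key preliminary observation is that every candidate $f \in C(e)$ satisfies $\abs{f} = \abs{e}$: by the Hyperedge Signature Constraint, $f$ can be a candidate for $e$ only when $Sig(e) = Sig(f)$, and because a signature is the multiset of vertex labels, equal signatures force equal arities. I would record this first, since it lets me replace $\abs{f}$ by $\abs{e}$ throughout, so that the loops and lists indexed by the data hyperedge $f$ are no costlier than those indexed by $e$.

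Granting this, the per-block bounds are routine. The first loop performs one $O(1)$ update $b_q[u] \pluseq 2^{pos(e)}$ per vertex $u \in e$, costing $O(\abs{e})$; the second loop does the same over $v \in f$, costing $O(\abs{f}) = O(\abs{e})$. Building $S_q^+$ and $S_H^+$ reads one pair per vertex, again $O(\abs{e})$ and $O(\abs{f}) = O(\abs{e})$, and the closing restoration of the IHBs subtracts the same constants over the same two vertex sets in $O(\abs{e})$. All of these contribute $O(\abs{e})$.

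The only step requiring care --- and the one real obstacle --- is the test $sort(S_q^+) \neq sort(S_H^+)$, because a generic comparison sort would cost $O(\abs{e}\log\abs{e})$ and violate the claimed bound. I would argue that the test merely decides whether two lists of $\abs{e}$ pairs agree as multisets, which can be done in $O(\abs{e})$ without a full sort: insert each pair of $S_q^+$ into a hash table keyed on $(b, L)$ with a running count, then decrement once per pair of $S_H^+$ and accept exactly when all counts return to zero. This sidesteps the potentially huge range of the bitmap component (up to $2^{\abs{E_q}}$) by hashing only the at most $\abs{e}$ pairs that actually occur; under the standard word-RAM assumption that each bitmap fits in $O(1)$ machine words, a radix sort over these word-sized keys gives the same linear bound deterministically. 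With the comparison realized in $O(\abs{e})$, every block is $O(\abs{e})$, so the total running time of \Cref{Alg:CheckIntersectionSignature} is $O(\abs{e})$.
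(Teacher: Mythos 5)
Your proposal is correct and follows essentially the same route as the paper's proof: a block-by-block accounting in which every loop is charged $O(\abs{e})$ and the multiset comparison of $S_q^+$ and $S_H^+$ is done in linear time via radix sort (the paper uses exactly this radix-sort argument for lines 7--8). Your explicit observation that $Sig(e)=Sig(f)$ forces $\abs{f}=\abs{e}$ is a nice touch that the paper leaves implicit, and your hash-table alternative is just a minor variant of the same idea.
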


\begin{proof}
    \Cref{Alg:CheckIntersectionSignature} computes $b^{M'}_q$ and $b^{M'}_H$ by considering each $u\in e$ and each $v\in f$ in lines 1--4.
    Then, it computes non-empty cells that are subsets of $e$ and $f$ ($S_q^+$ and $S_H^+$ in lines 5--6). In lines 7--8, it compares two arrays after sorting these arrays by radix sort, and removes the candidate $f$ if the arrays are not equal.
    It takes $O(\abs{e})$ time to compute IHB, to compute non-empty entries, and to sort and compare two arrays.
\end{proof}

\HGMatch also presents a necessary and sufficient condition for hypergraph pattern matching called equivalence of vertex profiles. This condition can be used to check whether a candidate $f\in C(e)$ is $M$-compatible in $O(\overline{a_q} \times \abs{E_q})$ time, where $\overline{a_q}$ is the average arity of the query hypergraph. The time complexity of our approach is an improvement by a factor of $\abs{E_q}$ compared to HGMatch.

\subsection{Matching Order}

In subgraph matching, two well-known heuristics are the degree heuristic \cite{RapidMatch, RISubgraphMatching} and the candidate size heuristic \cite{DAF, VEQ}.
The degree heuristic prioritizes query vertices with the highest degree, 
and the candidate size heuristic prioritizes query vertices with the smallest number of candidates. 
For subgraph matching on property graphs, van Leeuwen et al.~\cite{van2022general} suggest various matching orders, such as utilizing the number of constraints.

We propose a matching order that combines both the degree heuristic and the candidate size heuristic, adapted for the hypergraph context.
Line 3 of \Cref{Alg:Backtracking} selects a hyperedge based on the following criteria: 1) Unmapped hyperedges with only one candidate are selected first; 2) If no such hyperedge exists, we choose the hyperedge with the largest number of unmapped adjacent hyperedges;
3) In case of a tie, we select the hyperedge with the smallest number of candidates.

We prioritize unmapped hyperedges with only one candidate, because these hyperedges must be mapped eventually, and their early mapping immediately reduce the candidates on other hyperedges. We also give priority to hyperedges with many unmapped adjacent hyperedges, because they provide more constraints for subsequent matches. By considering both the number of unmapped adjacent hyperedges and the size of candidate sets, we can effectively prune the search space.

\section{Performance Evaluation}
\label{sec:PerformanceEvaluation}

In this section, we conduct experiments to evaluate the effectiveness of our algorithm, \MaCH. Our evaluation focuses on two main aspects. First, we compare our algorithm's performance against the state-of-the-art algorithms to demonstrate overall effectiveness of our approach. Second, we analyze individual techniques in our algorithm to show their specific contributions to performance improvement. 

\subsection{Experimental Setup}
\label{subsec:ExperimentalSetup}

\paragraph{Baseline}
We primarily compare \MaCH with \HGMatch \cite{HGMatch} because \HGMatch significantly outperforms other methods such as an extension of a subgraph matching algorithm \cite{IHSfilter}. We also include the very recently proposed \OHMiner \cite{OHMiner} in our main experiments. In addition, we compare \MaCH with the approach of transforming a hypergraph into a bipartite graph and solving the subgraph matching problem. For this approach, we use the state-of-the-art subgraph matching algorithm \GuP \cite{GuP}.

\begin{table}[t]
    \caption{Statistics of Data Hypergraphs. $\abs{V}$ and $\abs{E}$ denote the number of vertices and hyperedges, respectively. $\abs{\Sigma}$ is the size of the vertex label set (`-' indicates no labels), $a^{\max}$ is the maximum arity, and $\overline{a}$ is the average arity.}
    \centering
    \resizebox{\linewidth}{!}{
    \begin{tabular}{l r r r r r r r r r}
    \toprule
     & HC & NC & MA & NS & CH & CP & SB & EU & HB \\ 
    \midrule
    $\abs{V}$ & 1.3K & 1.2K & 73.9K & 5.6K & 327 & 242 & 294 & 1.0K & 1.5K \\
    $\abs{E}$ & 336 & 1.0K & 5.4K & 6.6K & 7.8K & 12.7K & 21.7K & 24.5K & 54.9K \\
    $\abs{\Sigma}$ & 2 & - & 704 & - & 9 & 11 & 2 & - & 2 \\
    $a^{\max}$ & 81 & 39 & 1.8K & 187 & 5 & 5 & 99 & 40 & 399 \\
    $\overline{a}$ & 35.15 & 6.17 & 24.19 & 9.70 & 2.33 & 2.42 & 9.90 & 3.62 & 22.15 \\
    \midrule
    \midrule
     & WT & CB & TC & CM & SA & CD & AR & AM & \\
    \midrule
    $\abs{V}$ & 88.9K & 1.7K & 172.7K & 1.0M & 15.2M & 1.9M & 2.3M & 13.3M & \\
    $\abs{E}$ & 66.0K & 83.1K & 221.0K & 252.7K & 1.1M & 2.2M & 4.2M & 4.8M & \\
    $\abs{\Sigma}$ & 11 & - & 160 & - & 26.3K & - & 29 & 258 & \\
    $a^{\max}$ & 25 & 25 & 85 & 925 & 61.3K & 280 & 9.4K & 2.4K & \\
    $\overline{a}$ & 6.86 & 8.81 & 3.18 & 3.02 & 23.67 & 3.43 & 17.17 & 7.40 & \\
    \bottomrule
    \end{tabular}
    }
    \label{tab:datasets}
\end{table}

\begin{figure*}[t]
    \centering
    \includegraphics[width = 0.87\textwidth]{}
    \caption{Average query processing time of \GuP, \HGMatch, \OHMiner, and \MaCH on different datasets and query sizes. The x-axis represents different query sizes, the left y-axis shows query processing time in milliseconds on a logarithmic scale, and the line graph represents the average number of embeddings (right y-axis) for each query size.}
    \label{fig:Time_consumption}
\end{figure*}

\paragraph{Datasets}
We conduct experiments on 10 real-world hypergraphs from \cite{ARBData}, which were previously used to evaluate \HGMatch \cite{HGMatch}, namely house-committees (HC), mathoverflow-answers (MA), contact-high-school (CH), contact-primary-school (CP), senate-bills (SB), house-bills (HB), walmart-trips (WT), trivago-clicks (TC), stackoverflow-answers (SA), and amazon-reviews (AR) \cite{SAdataset, TCdataset, WTdataset, ARdataset}.
We also evaluate on five unlabeled real-world hypergraphs, namely NDC-classes (NC), NDC-substances (NS), email-Eu (EU), congress-bills (CB), and coauth-MAG-history (CM) \cite{CMdataset}, by assigning the same label to all vertices.
In addition, we conduct experiments on the large hypergraphs used in \OHMiner, coauth-DBLP (CD) and AMiner (AM).
We remove all repeated hyperedges and self-loops from hypergraphs.
Additionally, for the MA and SA datasets, which contain multiple labels for vertices, we retain only the first label.
\Cref{tab:datasets} shows the statistics of the processed datasets.

\paragraph{Queries}
We generated query hypergraphs with varying sizes for each dataset, specifically containing 3, 6, 9, 12, or 15 hyperedges. For each dataset and each query size setting, we create a set of 30 query hypergraphs by using a random walk-based approach.
We do not impose any constraints on the number of vertices in the generated query hypergraphs, whereas \HGMatch limits the number of vertices to at most 35 and \OHMiner to at most 40 in their experiments \cite{HGMatch, OHMiner}.

All competitors (\HGMatch, \GuP, and \OHMiner) occasionally produce incorrect results or segmentation faults. To maintain a fair comparison, we exclude these queries from our reports on the number of solved queries, query processing time, and memory usage. As a result, the total number of queries per dataset and query size setting may be less than 30 in our experiments due to these exclusions.

\paragraph{Environment} \MaCH is implemented in C++. The source code of \HGMatch, which is implemented in Rust, was obtained from the authors.
\OHMiner and \GuP are implemented in C++ and Rust, respectively, and both are publicly available.
Experiments are conducted on a machine with two Intel Xeon Silver 4114 2.20GHz CPUs and 256GB memory.

In our analysis, we use the geometric mean when referring to average query processing times and other performance metrics. The geometric mean is particularly suitable for our experiments as these metrics vary widely across queries.
We set a time limit of 10 minutes for each query and exclude queries that none of the algorithms (\GuP, \HGMatch, \OHMiner, and \MaCH) solve within this time limit.
When an algorithm fails to solve a query that another algorithm solves, we record 10 minutes for the failing algorithm when calculating the average query processing time.

\paragraph{Preprocessing}
Following \HGMatch's methodology, we implemented a preprocessing step that builds an index of the data hypergraph prior to query processing. This index consists of signatures of all hyperedges in the data hypergraph and partitions of hyperedges based on their signatures.
Unlike \HGMatch, we do not build an inverted hyperedge index. This preprocessing is performed using only the data hypergraph, before any query hypergraphs are given.

\HGMatch's preprocessing time ranges from 0.9 ms for NC to 9,693.8 ms for AR, while our preprocessing time ranges from 1.2 ms for NC to 7260.0 ms for AM, generally increasing with the number of hyperedges in the dataset.
This preprocessing time is negligible compared to the total query processing time.
For example, on AR dataset, preprocessing takes about 7 seconds while processing 30 queries of size 15 takes more than 1.6 hours.
\OHMiner's preprocessing time ranges from 15.9 ms for HC to 139,121.0 ms for AM, and it runs out of memory for AR.
For \GuP, the preprocessing time for transforming a data hypergraph into a bipartite graph ranges from 0.3 ms for NC to 18,633.4 ms for AR.

The preprocessing time is not included in the query processing times reported in our results, as preprocessing is performed once for each dataset, rather than for each query.

\subsection{Comparison with State-of-the-Art Method}

\paragraph{Query Processing Time}
\Cref{fig:Time_consumption} presents a comparison of the average query processing time between \GuP, \HGMatch, \OHMiner, and \MaCH.
The time consumption is measured in milliseconds (ms) and displayed on a logarithmic scale. Additionally, a line graph represents the average number of embeddings for each query size.

For the EU dataset with query size 15 and the CB dataset with query size 6, \GuP, \HGMatch, and \OHMiner solve no queries within the time limit, while \MaCH solves 2 and 5 queries, respectively. For these datasets and query sizes, the average query processing times for \GuP, \HGMatch, and \OHMiner are recorded as the time limit of 10 minutes.
Likewise, in many cases including SA and AR datasets, \GuP fails to solve any queries within the time limit and its average query processing times are recorded as the time limit of 10 minutes.
For AR, \OHMiner runs out of memory (OOM) and cannot solve any queries.

The query processing time of \MaCH increases as the number of embeddings grows, demonstrating a clear correlation with the number of embeddings.
In contrast, \GuP and \HGMatch do not always follow this trend: for HC, MA, HB, and SA, they show high query processing times even when the number of embeddings is relatively low.

\MaCH outperforms its competitors in terms of query processing time, with the performance gap widening for larger query sizes.
Notably, \MaCH demonstrates particularly effective results for datasets with a large vertex label set and high average arity, i.e., HC, MA, and SA. These properties lead to more distinct hyperedge signatures, resulting in smaller candidate hyperedge sets and fewer embeddings.
The performance gap is most evident in the SA dataset for queries of size 15, where \MaCH processes queries more than 500,000 times faster than \GuP, 4,000 times faster than \HGMatch, and 14 times faster than \OHMiner.
\MaCH also shows significant performance advantages for CH with query size 15, especially compared to \OHMiner, processing queries more than 16 times faster than \GuP, 57 times faster than \HGMatch, and 80 times faster than \OHMiner.

\MaCH sometimes takes more time than \OHMiner, i.e., for small queries on HC ($q_3$) and NC ($q_3, q_6, q_9$), and for some cases on NS ($q_9, q_{12}, q_{15}$), CM ($q_3$, $q_{12}$), and CB ($q_{15}$).

As for small queries on HC and NC, these are the least time-consuming cases in our evaluation.
In these cases, our initial filtering phase can be an overhead, accounting for 89.8\% of query processing time on average.
However, as query processing time increases (due to larger query sizes or more challenging datasets), our filtering significantly reduces the number of candidates with a relatively low overhead. On AR, it spends less than 20\% of the query processing time, while achieving up to 99\% candidate reduction.

Unlabeled hypergraphs such as NS, CM and CB are particularly challenging due to the absence of vertex labels.
This absence significantly reduces the number of distinct hyperedge signatures, thus increasing the number of candidates per query.
For these unlabeled hypergraphs, MaCH may run slower than \OHMiner on some queries. However, \Cref{fig:Solved_queries} shows that \MaCH successfully solves more queries within the time limit than \OHMiner in all of these challenging datasets.
In our experiments, timeout queries were recorded as 10 minutes, which gives an advantage to algorithms that fail to finish with the time limit (10 min). For instance, on NS with query size 15, \MaCH successfully solves all queries that \OHMiner solves, plus two additional queries for which \OHMiner fails to finish. When we ran these two queries without time limits, \OHMiner takes 1,234s (20.6 mins) and 15,842s (4.4 hrs), while \MaCH takes 47s and 386s (6.4 mins), respectively. Thus, actual performances of \MaCH over \OHMiner are better than those reported in \Cref{fig:Time_consumption}.

\begin{figure}[t]
    \centering
    \includegraphics[width = 0.9\linewidth]{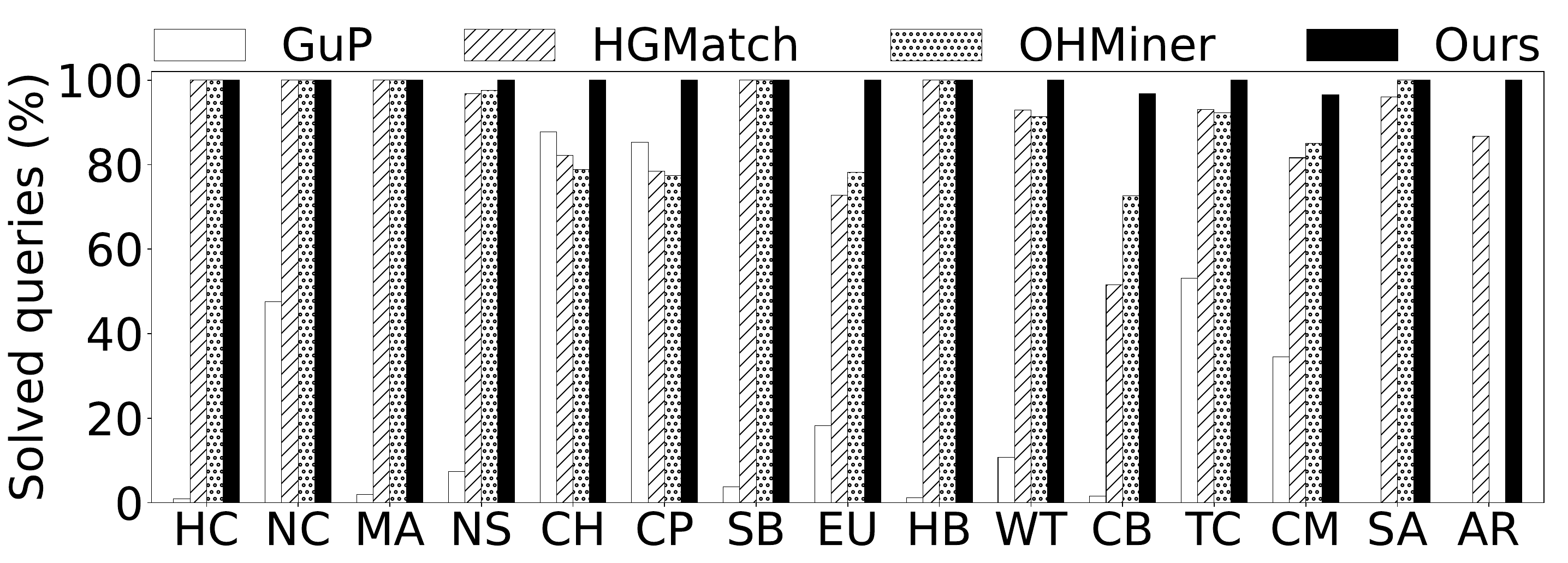}
    \caption{Ratio of solved queries within time limit of 10 minutes on different datasets.}
    \label{fig:Solved_queries}
\end{figure}

\begin{figure}[t]
    \centering
    \includegraphics[width=0.9\linewidth]{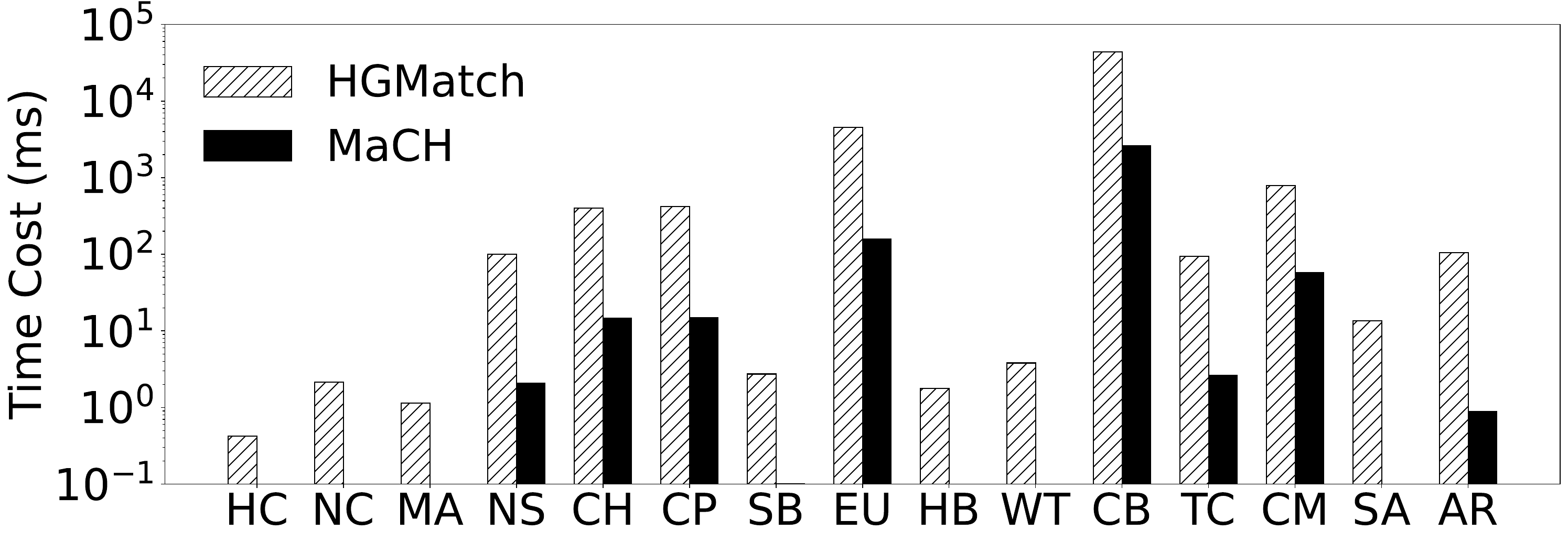}
    \caption{Average verification time of \HGMatch and \MaCH on different datasets. The y-axis represents the time in milliseconds. Bars with times less than 0.1 ms are omitted.}
    \label{fig:BreakDown}
\end{figure}

The intersection constraint is a necessary and sufficient condition for valid embeddings. Since \HGMatch also proposed a necessary and sufficient condition, equivalence of vertex profiles, we compare the two conditions for their efficiency.
\Cref{fig:BreakDown} shows the candidate verification time spent for checking the intersection constraint in \MaCH and that for checking the equivalence of vertex profiles in \HGMatch (this result includes only queries finished by both algorithms within the time limit).

For datasets whose average verification time of \MaCH is more than 0.1 ms (i.e., NS, CH, CP, EU, CB, TC, CM, and AR), \HGMatch spends 94.1\% to 96.9\% of its query processing time on the equivalence of vertex profiles, and \MaCH spends 69.4\% to 88.0\% on the intersection constraint.
These datasets require a significant amount of time to solve, highlighting the importance of an effective condition for valid embeddings. As shown in \Cref{fig:BreakDown}, our intersection constraint consistently requires less time across all datasets compared to \HGMatch's condition.
Particularly, for AR, checking the equivalence of vertex profiles in \HGMatch takes more than 100 times longer than checking the intersection constraint in \MaCH.

For remaining datasets, the number of candidates in the CHS after filtering is small, making the verification time of \MaCH negligible. As can be seen in \Cref{fig:Time_consumption}, these datasets are the least time-consuming for \MaCH.
We further analyze the effectiveness of our techniques in detail in \Cref{subsec:ablation}.

\paragraph{Ratio of Solved Queries}
\Cref{fig:Solved_queries} presents the ratio of queries solved by each algorithm to the queries solved by at least one algorithm within the 10-minute time limit.
This ratio is shown for \GuP, \HGMatch, and \MaCH across all datasets.
\MaCH successfully solves all the queries that \HGMatch, \OHMiner, or \GuP solve, except a few queries in CM and CB datasets, and outperforms them by solving additional queries in many cases.

For HC, MA, and SA, which typically have a low number of embeddings per query, \HGMatch, \OHMiner, and \MaCH generally complete the search within the time limit. \GuP, on the other hand, solves almost no queries on these datasets. \GuP tends to perform worse on datasets having high average arity (HC, MA, HB, SA, and AR) than those having low average arity (NC, CH, CP, EU, WT, TC, and CM).

CH and CP datasets also present significant challenges for \HGMatch, \OHMiner, and \MaCH.
They also have the high number of embeddings per query, caused by low maximum arity and limited number of labels.
While \GuP shows better performance than \HGMatch and \OHMiner on CH and CP, \MaCH still outperforms \GuP on these datasets.

\begin{figure}[t]
    \centering
    \includegraphics[width = 0.9\linewidth]{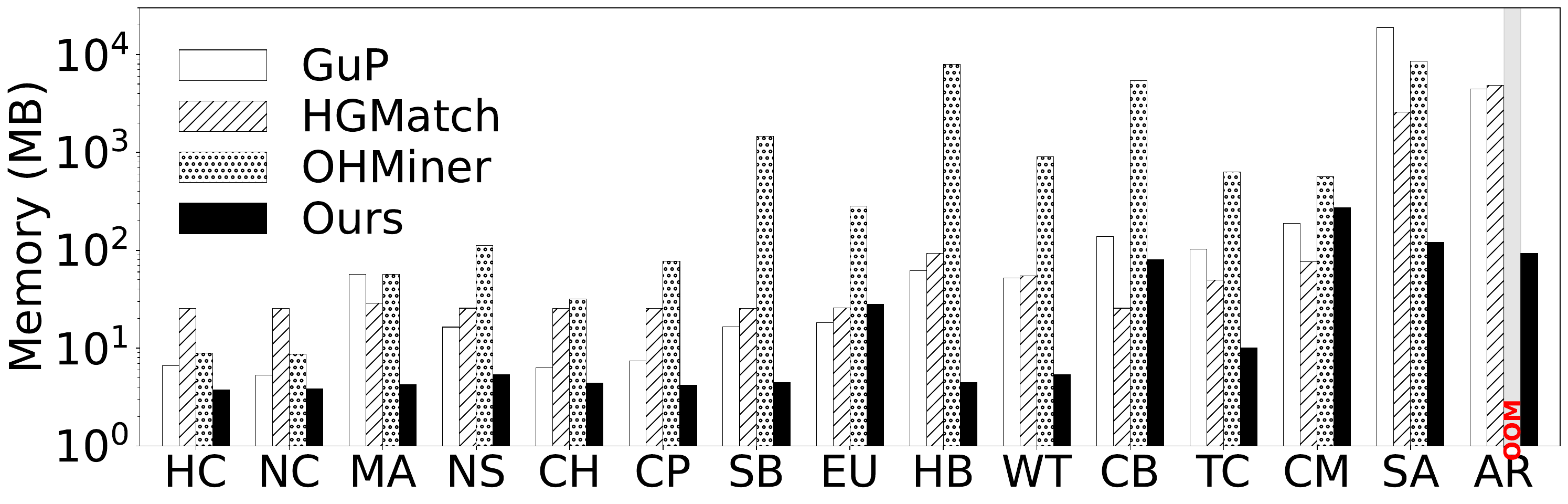}
    \caption{Peak memory consumption in megabytes, averaged over the queries for each dataset.}
    \label{fig:Memory_consumption}
\end{figure}

\paragraph{Memory Consumption}
\Cref{fig:Memory_consumption} illustrates the peak memory consumption of \GuP, \HGMatch, \OHMiner and \MaCH averaged over the queries that are solved by at least one algorithm. The memory consumption is measured in megabytes.
In general, \MaCH exhibits comparable or better memory efficiency than its competitors.
Notably, \OHMiner consumes more than 1700 times the memory of \MaCH for HB, and runs out of memory (OOM) for AR.
While memory consumption tends to increase with the size of the data hypergraph for all algorithms, \MaCH demonstrates significantly lower memory consumption on the two largest datasets (SA and AR).

\begin{figure}[t]
    \centering
    \includegraphics[width=0.9\linewidth,trim=20 0 20 0,clip]{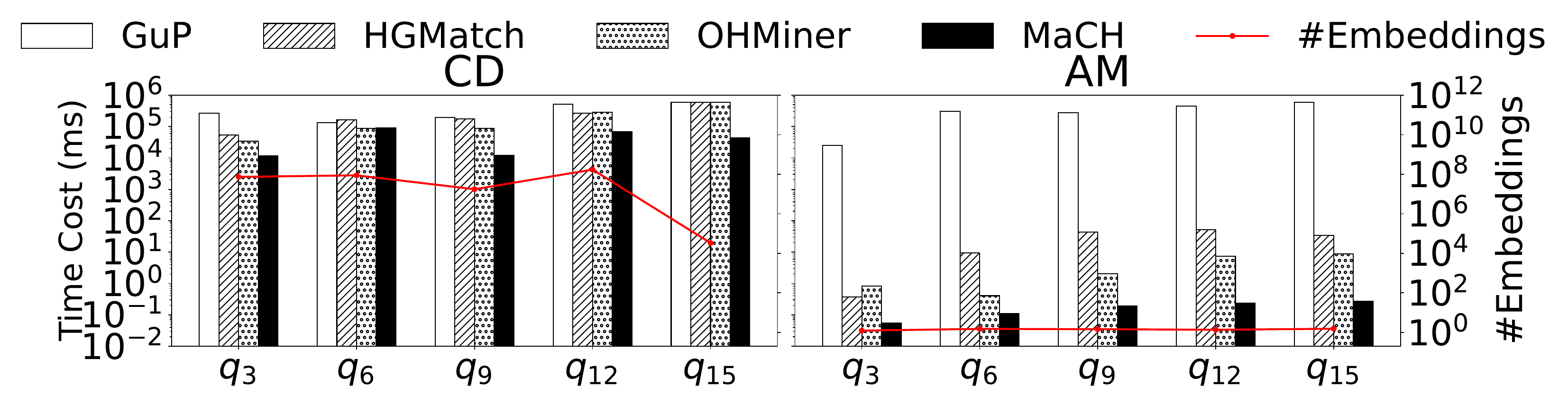}
    \caption{Average query processing time of GuP, HGMatch, OHMiner, and MaCH on CD and AM.}
    \label{fig:large_time}
\end{figure}

\paragraph{Additional Large Hypergraphs}
We conducted additional experiments on the coauth-DBLP (CD) and AMiner (AM) datasets, which are the largest hypergraphs evaluated in the \OHMiner paper.
\Cref{fig:large_time} presents a comparison of the average query processing time between GuP, HGMatch, OHMiner, and MaCH on the CD and AM datasets. On both datasets, MaCH outperforms its competitors in terms of query processing time. MaCH is particularly effective on AM where the number of vertex labels (258) is large. Specifically, MaCH is more than 1,000,000 times faster than GuP, 100 times faster than HGMatch, and 30 times faster than OHMiner for AM with query size 15. For CD, MaCH is more than 10 times faster than GuP and HGMatch, and more than 7 times faster than OHMiner on query size 9.

\paragraph{Comparison of Approaches}
The approach of transforming a hypergraph into a bipartite graph (\GuP) cannot utilize hypergraph-specific properties, leading to poor performance especially on high-arity datasets. HGMatch's on-the-fly candidate generation and repetitive checks in verification cause inefficiency even when the number of embeddings is relatively low. OHMiner generally achieves better performance than GuP and HGMatch but consumes excessive memory, making it not viable for large-scale datasets. 
In contrast, MaCH addresses these limitations through its efficient constraints, candidate hyperedge space, and Match-and-Filter framework, outperforming all competitors while maintaining comparable or better memory consumption.

\subsection{Effectiveness of Individual Techniques}
\label{subsec:ablation}

\begin{figure}[t]
    \centering
    \includegraphics[width = 0.9\linewidth]{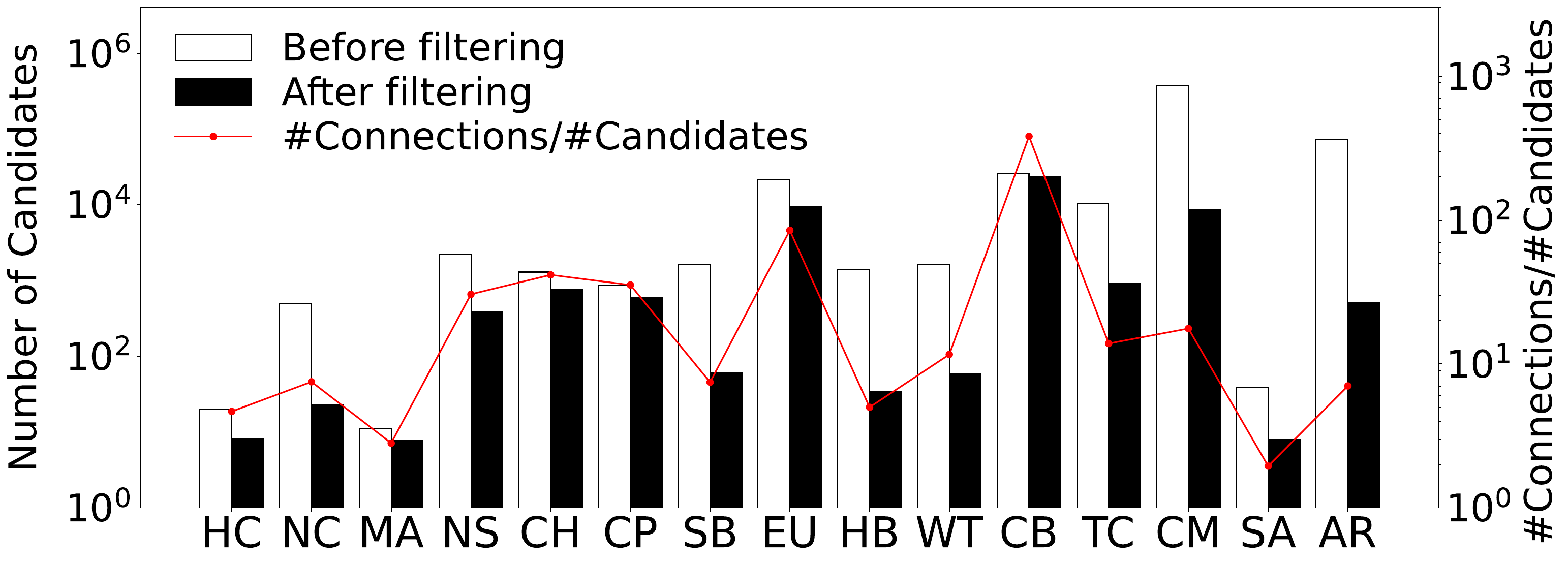}
    \caption{Average size of the CHS before and after initial filtering on different datasets. The left y-axis shows the number of candidates. A line graph represents the average number of connections per candidate (right y-axis) after initial filtering.}
    \label{fig:FilteringEffect}
\end{figure}

\paragraph{Initial Filtering by Connectivity Constraint}
\Cref{fig:FilteringEffect} illustrates the effect of our filtering technique on the candidate hyperedge space. The figure compares the average number of candidates in the CHS before and after applying our filtering method for queries.
A line graph represents the average number of connections per candidate after filtering. 

In all datasets, our filtering technique reduces the number of candidates and the number of connections significantly.
The effect of our filtering is especially evident in the AR dataset.
In AR, filtering reduces the number of candidates by 99.3\% and the number of connections by 97.7\%, eliminating a significant portion of candidates and connections from the initial CHS. It is due to the relatively large average arity and large number of labels in the dataset.

For CH and CP datasets, which have the small average arity and small number of labels (opposite to AR), our filtering still demonstrates its effectiveness.
It reduces candidates by 41.1\% and 32.1\% respectively, and connections by 47.2\% and 34.1\%.

For the CB dataset, which has the highest number of connections per candidate after filtering, the filtering process is less significant, highlighting the importance of efficient filtering during the matching process.

\begin{figure}[t]
    \centering
    \includegraphics[width =0.9\linewidth]{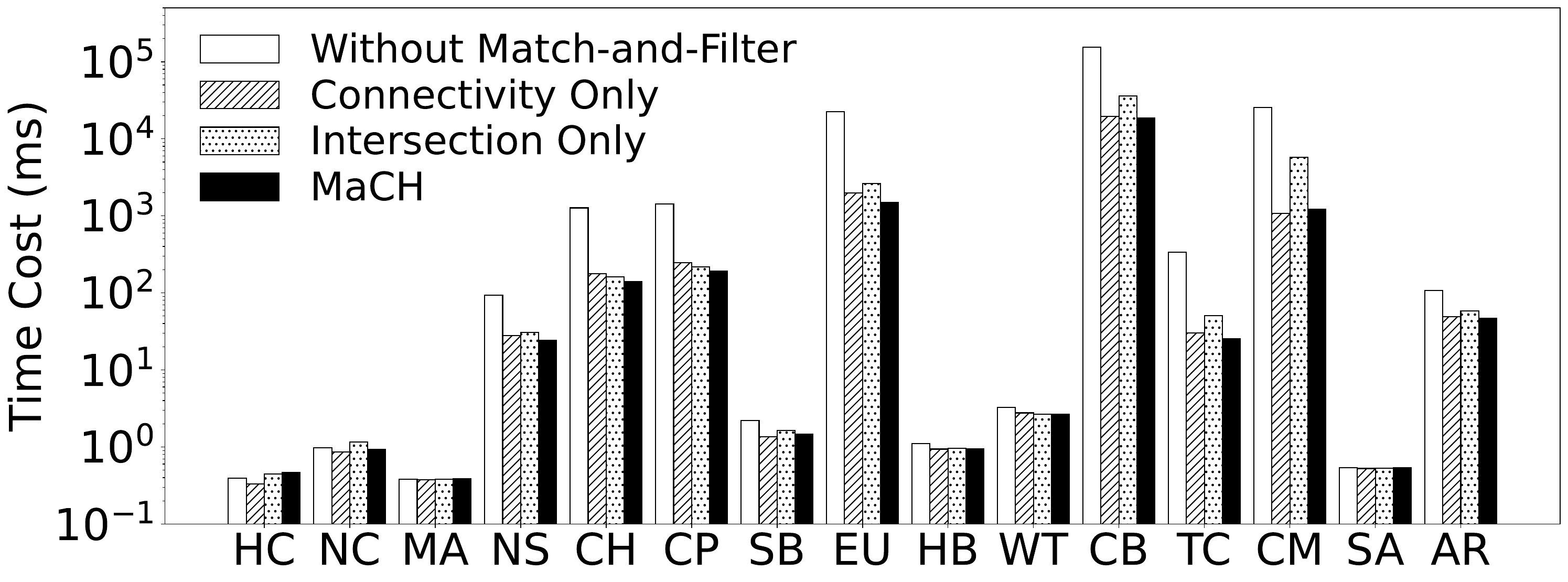}
    \caption{Average query processing time with and without filtering techniques during matching. The y-axis shows time in milliseconds.
    The first bar shows time without filtering, the second bar shows time when only filtering by intersection constraint is applied, and the third bar shows time when both intersection and connectivity constraints are applied.}
    \label{fig:CleaningTime}
\end{figure}

\paragraph{Match-and-Filter Framework}
\Cref{fig:CleaningTime} demonstrates the effect of our match-and-filter framework. The first bar shows the query processing time without match-and-filter, where no filtering techniques are used during the matching phase. In this algorithm, candidates may not be guaranteed to be $M$-compatible.
Thus, before extending a partial embedding $M$ with a candidate $(e,f)$ to $M\cup \{(e,f)\}$, we need to verify whether $(e,f)$ is $M$-compatible using the intersection constraint.
The second bar shows match-and-filter with filtering only by connectivity constraint is applied. Like the first bar, we still need to verify $M$-compatibility using the intersection constraint before extending a partial embedding.
The third bar shows filtering only by the intersection constraint is applied, thus checking $M$-compatibility as in MaCH.
The fourth bar shows MaCH, where both filtering techniques are applied.

Notably, using either constraint for filtering (second or third bar) demonstrates significant improvement over the baseline (first bar), highlighting the effectiveness of our match-and-filter framework itself.
This improvement is particularly shown in datasets with time-consuming datasets, such as EU, CB, and CM. For instance, in the CM dataset, our algorithm with both filtering techniques is 20.7 times faster.

In general, filtering only by connectivity constraint outperforms filtering only by intersection constraint because the connectivity constraint is fast ($O(1)$ time per candidate) and eliminates a considerable number of candidates in the CHS. However, it is important to note that even when filtering only by connectivity constraint is used in match-and-filter, the intersection constraint remains essential for verifying $M$-compatibility for extending partial embeddings.
Using either constraint for filtering (second or third bar) demonstrates significant improvement over the baseline (first bar), highlighting the effectiveness of our match-and-filter framework itself.
Overall, the results show that it is beneficial to use both filtering techniques together, especially for challenging datasets such as CH, CP, EU, and CB.

\section{Conclusion}
\label{sec:Conclusion}

We have proposed a novel hypergraph pattern matching algorithm that utilizes the intersection constraint, candidate hyperedge space, and Match-and-Filter framework. The intersection constraint, as a necessary and sufficient condition for valid embeddings, has potential applications beyond hypergraph pattern matching, such as hypergraph isomorphism \cite{HypergraphIsomorphism}. In addition, the combination of CHS and Match-and-Filter is a promising framework for other problems that require effective search space pruning, including subhypergraph random sampling \cite{SubhypergraphSampling} and subhypergraph cardinality estimation \cite{SubhypergraphCounting}.

\section*{Acknowledgements}
S. Song, W. Shin, and K. Park were supported by the Institute of Information \& communications Technology Planning \& Evaluation (IITP) grant funded by the Korea government (MSIT) (No. 2018-0-00551, Framework of Practical Algorithms for NP-hard Graph Problems).
G. F. Italiano is partially supported by the Italian Ministry of University and Reseach under PRIN Project n. 2022TS4Y3N- EXPAND: scalable algorithms for EXPloratory Analyses of heterogeneous and dynamic Networked Data.
W. Zhang's research was supported by ARC DP230101445 and ARC FT210100303.

\clearpage

\section*{AI-Generated Content Acknowledgement}
In this paper, we used AI models, including Claude, ChatGPT, and Gemini, to improve the readability of our writing and to assist with debugging code.

{
\balance
\bibliography{bib}
}
\unless\ifsubmit
\clearpage
\nobalance
\appendix
\section{Appendix}
\label{sec:Appendix}

\subsection{Proof of \Cref{thm:IntersectionSignature}}

{\ThmIntersectionConstraint*}

\begin{proof}
    ($\Rightarrow$) Given that $M$ is an embedding, there exists an underlying subhypergraph isomorphism, which is an injective function $\phi: V_q \to V_H$ that preserves vertex labels and satisfies $\phi(e)=M(e)$ for every $e\in E_q$. For any $S \subseteq E_q$, $\phi$ maps vertices in $\bigcap_{e\in S} e$ to those in $\bigcap_{e\in S} M(e)$, which implies $Sig(\bigcap_{e\in S} e) = Sig(\bigcap_{e\in S} M(e))$.
    
    ($\Leftarrow$) We construct a subhypergraph isomorphism $\phi$ inductively. For each subset $S \subseteq E_q$ of size $i$ (from $\abs{E_q}$ down to 1), we map every unmapped vertex in $\bigcap_{e\in S} e$ to an unmapped vertex in $\bigcap_{e\in S} M(e)$ with the matching label.

    For the base case $S = E_q$, since signatures of $\bigcap_{e\in E_q} e$ and $\bigcap_{e\in E_q} M(e)$ are the same, there are the same number of vertices for each label in $\bigcap_{e\in E_q} e$ and $\bigcap_{e\in E_q} M(e)$. Thus, there exists a bijective mapping between them that preserves labels. We define $\phi$ on the vertices in $\bigcap_{e\in E_q} e$ (as shown in dark blue in \Cref{subfig:SignatureExample}) as this bijective mapping.

    Now assume that we have constructed $\phi$ for all subsets $S$ of size greater than $k$. For a set $S$ of size $k$, consider the unmapped vertices in $\bigcap_{e\in S} e$ (as shown in blue in \Cref{subfig:CellExample}) and the unmapped vertices in $\bigcap_{e\in S} M(e)$ (as shown in blue in \Cref{subfig:CellExample_data}). Since $Sig(\bigcap_{e\in S} e) = Sig(\bigcap_{e\in S} M(e))$, there are the same number of vertices for each label in $Sig(\bigcap_{e\in S} e)$ and $Sig(\bigcap_{e\in S} M(e))$.
    For each vertex $u\in \bigcap_{e\in S} e$ which is already mapped, the corresponding vertex $\phi(u)\in \bigcap_{e\in S} M(e)$ is also mapped by the inductive assumption.
    This one-to-one correspondence between mapped vertices ensures that there are the same number of unmapped vertices for each label in $\bigcap_{e\in S} e$ and $\bigcap_{e\in S} M(e)$, and thus there exists a bijective mapping between unmapped vertices that preserves labels. We define $\phi$ on the unmapped vertices in $\bigcap_{e\in S} e$ as this bijective mapping.
    
    By induction, $\phi$ is a subhypergraph isomorphism that satisfies $\phi(e)=M(e)$ for every $e\in E_q$. That is, $M$ is a subhypergraph embedding with the underlying subhypergraph isomorphism $\phi$.
\end{proof}

\subsection{Proof of \Cref{thm:FilteringtIme}}

{\ThmFilteringtIme*}

\begin{proof}
    A pair of adjacent hyperedges $(e,e')$ is pushed to the queue initially and when candidates are removed from $C(e')$.
    That is, $(e,e')$ is pushed to the queue $O(\abs{C(e')})$ times. When $(e,e')$ is popped from the queue, it checks whether $C(e'\mid e,f)$ is empty for every $f \in C(e)$, which takes $O(\abs{C(e)})$ time in total. Therefore, the total time complexity is $O(\sum_e\sum_{e'}\abs{C(e)}\abs{C(e')})=O((\sum_e\abs{C(e)})^2)$.
\end{proof}

\subsection{Proof of \Cref{thm:HILCequivalence}}
\label{subsec:HILCequivalence}
{\ThmHILCequivalence*}
\begin{proof}
    
($\Leftarrow$)
For any label $l$, the number of vertices with label $l$ in the intersection $\bigcap_{e\in S} e$ is equal to the sum of the numbers of vertices with label $l$ across all cells that compose this intersection. Given that the signature of a cell $S$ matches the signature of $M(S)$ for every query hyperedge set $S$, it follows that the signatures of intersections, $Sig(\bigcap_{e\in S} e)$ and $Sig(\bigcap_{e\in S} M(e))$, also match for every $S$. By \Cref{thm:IntersectionSignature}, $M$ is a partial embedding.

($\Rightarrow$) Conversely, if $M$ is a partial embedding of $q'$, then by \Cref{thm:IntersectionSignature}, $Sig(\bigcap_{e\in S} e) = Sig(\bigcap_{e\in S} M(e))$ for every $S\subseteq E_{q'}$. We can derive that the signature of a cell $S$ matches the signature of $M(S)$ for every query hyperedge set $S$ using the inclusion-exclusion principle.
\end{proof}

\subsection{Proof of \Cref{thm:IScomplexity}}

{\ThmISComplexity*}
\label{subsec:ISComplexityProof}
\begin{proof}
    When we extend $M$ to $M'=M\cup\{(e,f)\}$ by mapping $e$ to $f$, each cell $S$ of $E_{q'}$ that contains vertices in $e$ is divided into two cells as follows.
    \begin{itemize}[leftmargin=*]
        \item $A = S \cap e$, i.e., the cell in $S$ which is a subset of $e$.
        \item $B = S \setminus A$, i.e., the cell in $S$ which is not a subset of $e$.
    \end{itemize}
    Since $M$ is a partial embedding, $\I_q^M(b_S,l)=\I_H^M(b_S,l)$ holds for the bitmap $b_S$ of $S$.
    Also, $\I_q^{M'}(b_A,l)=\I_H^{M'}(b_A,l)$ for the bitmap $b_A$ of the cell $A$ by the given condition.
    Since $B=S\setminus A$, 
    we have  $\I_q^{M'}(b_B,l)=\I_H^{M'}(b_B,l)$  for the bitmap $b_B$ of the cell $B$. Therefore, $\I_q^{M'}=\I_H^{M'}$ holds for every cell and label.
\end{proof}

\subsection{Memory and Ratio of Solved Queries for additional Large Hypergraphs}

\begin{figure}[t]
    \centering
    \includegraphics[width=\linewidth]{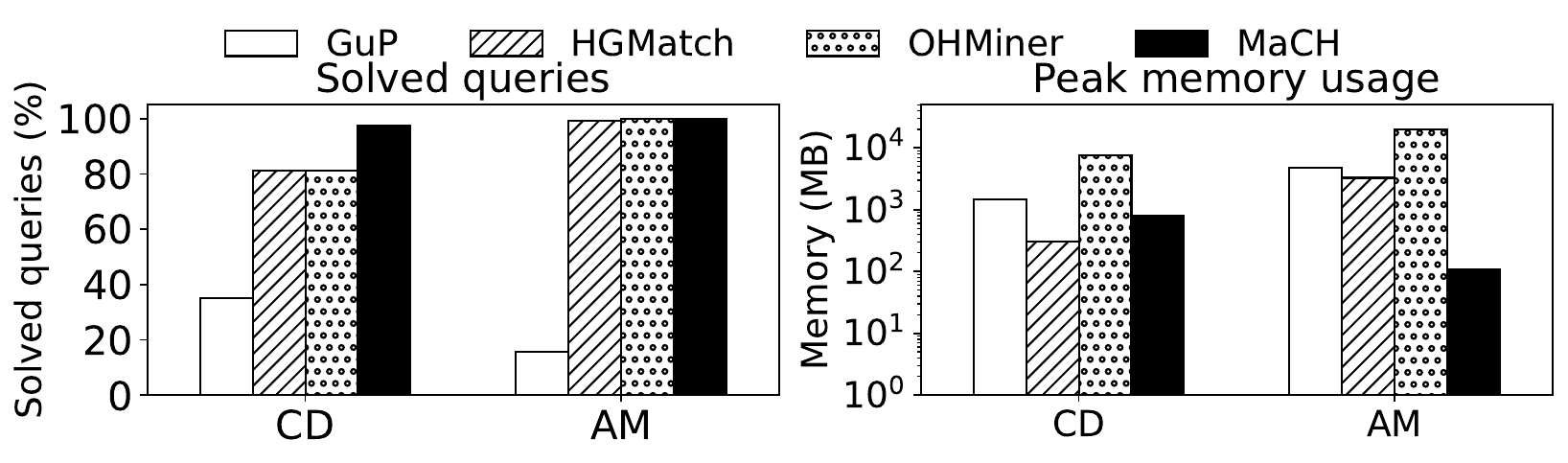}
    \caption{(Left) Ratio of solved queries within time limit of 10 minutes on CD and AM. (Right) Peak memory consumption in megabytes, averaged over the queries.}
    \label{fig:large_fig}
\end{figure}

\Cref{fig:large_fig} presents a comparison of the ratio of solved queries and the peak memory consumption between GuP, HGMatch, OHMiner, and MaCH on the CD and AM datasets.

The left figure shows the ratio of solved queries within the time limit of 10 minutes to the total number of queries solved by at least one algorithm.
MaCH consistently solved more queries compared to its competitors.

The right figure shows the peak memory consumption averaged over the queries that are solved by at least one algorithm.
MaCH exhibits comparable memory efficiency on CD and shows significantly lower memory consumption on AM, where it consumes 32 times less memory than HGMatch and 200 times less memory than OHMiner.

\subsection{Evaluation of Different Matching Orders}
\label{subsec:app_sec_matching_order}
\begin{figure}[t]
    \centering
    \includegraphics[width=0.98\linewidth]{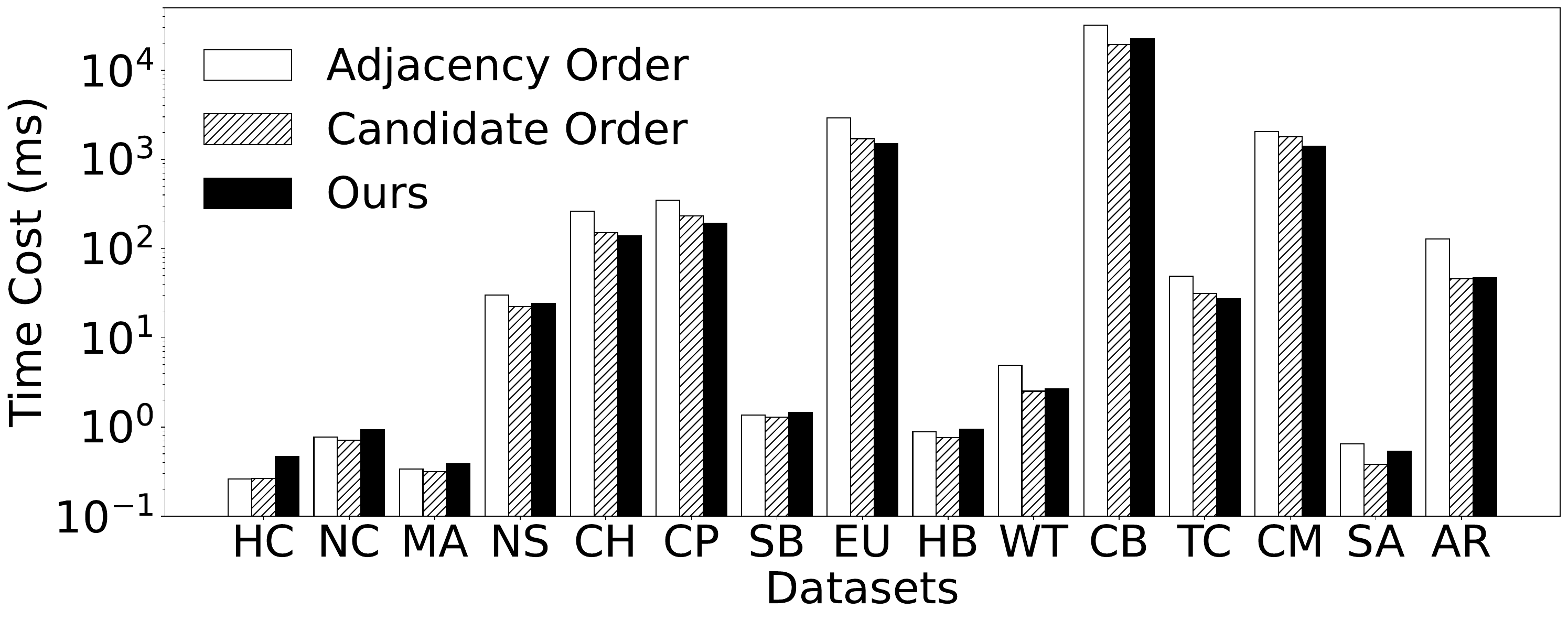}
    \caption{Average query processing time of Adjacency Order, Candidate Order, and our matching order on different datasets. The y-axis represents the time in milliseconds.}
    \label{fig:app_matching_order}
\end{figure}

In subgraph matching, two well-known heuristics are the degree heuristic and the candidate size heuristic. The degree heuristic prioritizes query vertices with the highest degree, 
and the candidate size heuristic prioritizes query vertices with the smallest number of candidates.
In hypergraphs, the degree in graphs corresponds to the number of adjacent hyperedges, and the candidate size corresponds to the number of candidate hyperedges.
Thus we consider (1) Adjacency Order, which prioritizes hyperedges with the most unmapped adjacent hyperedges, and (2) Candidate Size Order, which prioritizes hyperedges with the smallest number of candidate hyperedges.
Our matching order is a hybrid approach including the criterion that prioritizes query hyperedges with only one candidate.

\Cref{fig:app_matching_order} shows the impact of different matching order strategies on query processing time. The results show that while all three matching orders perform reasonably well, certain datasets benefit from specific strategies. Our hybrid approach achieves good performances on most of the time-consuming datasets (CH, CP, EU, and CM), validating our design choice.

\subsection{Comparative Analysis with Filtering-First Approaches}
\begin{figure}[t]
    \centering
    \includegraphics[width=0.98\linewidth]{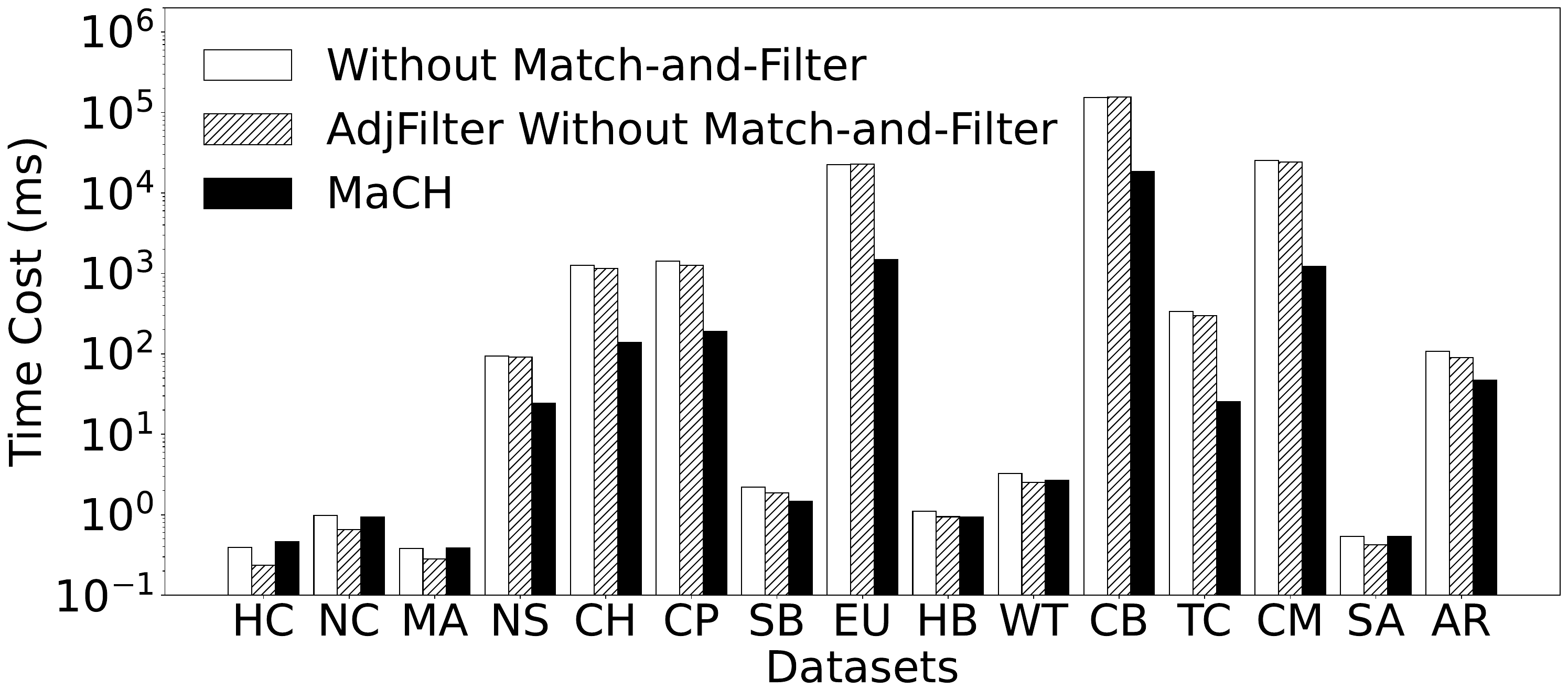}
    \caption{Average query processing time of traditional filtering-first approaches versus \MaCH. Bars show \MaCH without Match-and-Filter, Adjacency Filter without Match-and-Filter, and MaCH. The y-axis shows time in milliseconds.}
    \label{fig:AdjFiltering}
\end{figure}

For comparative analysis with traditional filtering-first approaches, we conducted experiments comparing our Match-and-Filter framework against filtering-first strategies.

Traditional filtering-first approaches, such as Ha et al. \cite{IHSfilter} apply filtering before matching begins. We implemented an enhanced version of adjacency-based filtering of Ha et al., which filters out candidates based on the number of adjacent hyperedges with matching signatures (stronger than their arity-based filter).

\Cref{fig:AdjFiltering} shows the comparison: (1) \MaCH without match-and-filter framework, (2) \MaCH without match-and-filter + adjacency filtering (AdjFilter Without Match-and-Filter), and (3) \MaCH.
Both (1) and (2) are traditional filtering-first approaches, where all filtering occurs before matching begins.
The results show the limitations of filtering-first approaches. While additional adjacency filtering provides modest improvements on small datasets (HC, NC, MA), it fails to improve performances on challenging datasets like EU, CB, and CM.

This occurs because initial filtering cannot capture which candidates will become invalid as the partial embedding grows during matching. Complex connectivity patterns involving multiple hyperedges only become apparent during the matching process. Our Match-and-Filter framework achieves 19.8 times speedup over AdjFilter Without Match-and-Filter on the CM dataset (Figure \ref{fig:AdjFiltering}).
These results validate that Match-and-Filter better handles the complex connectivity patterns inherent in hypergraphs.

\subsection{Performance Differences between Rust and C++}

\begin{figure}[t]
    \centering
    \includegraphics[width=\linewidth]{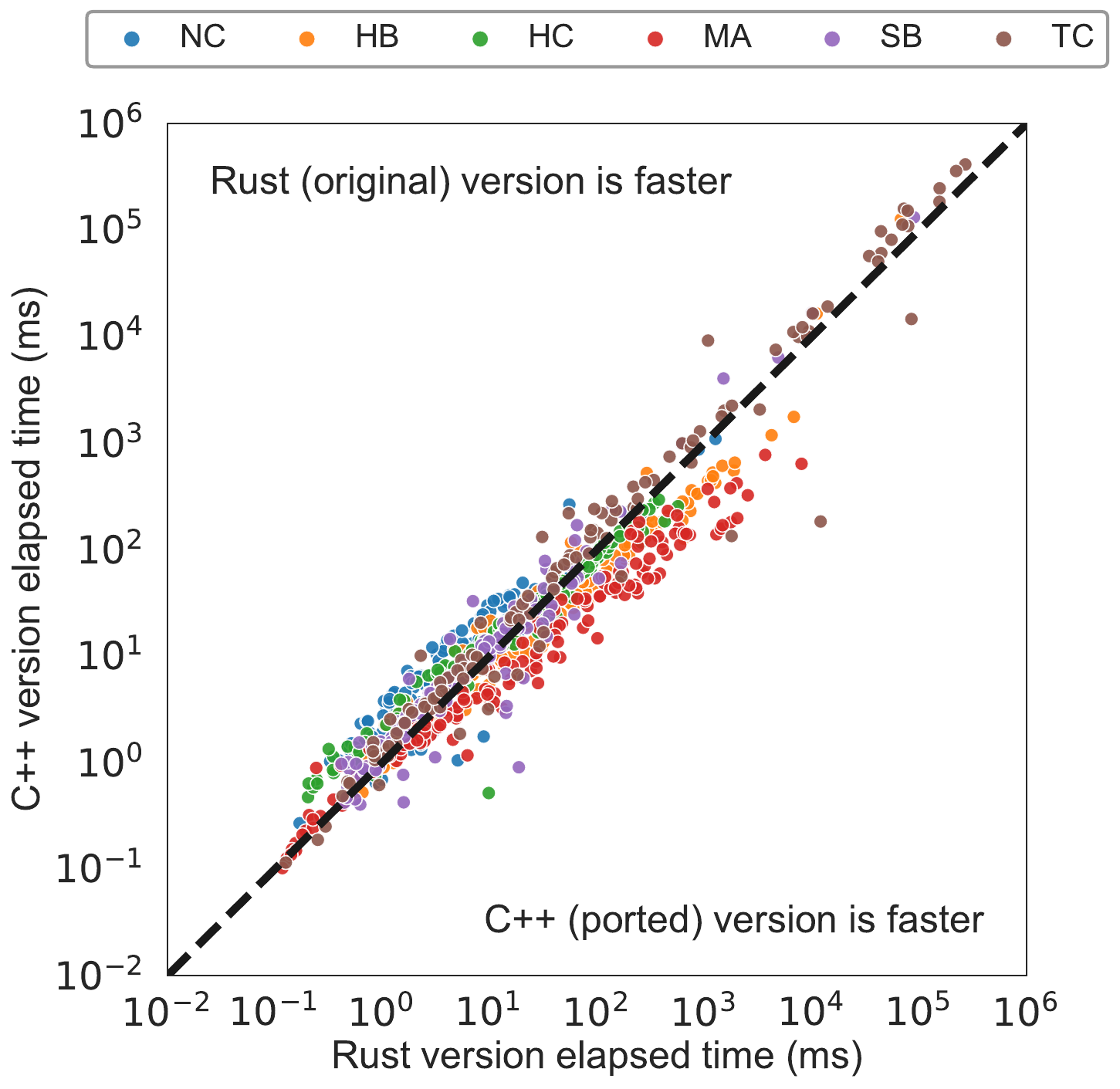}
    \caption{Comparison of query processing time for queries on HC, NC, MA, SB, HB, and TC datasets. The x-axis and y-axis represent the query processing time of Rust version (HGMatch code obtained from authors) and C++ version (ported), respectively.}
    \label{fig:lang_comp}
\end{figure}

\begin{figure}[t]
    \centering
    \includegraphics[width=\linewidth]{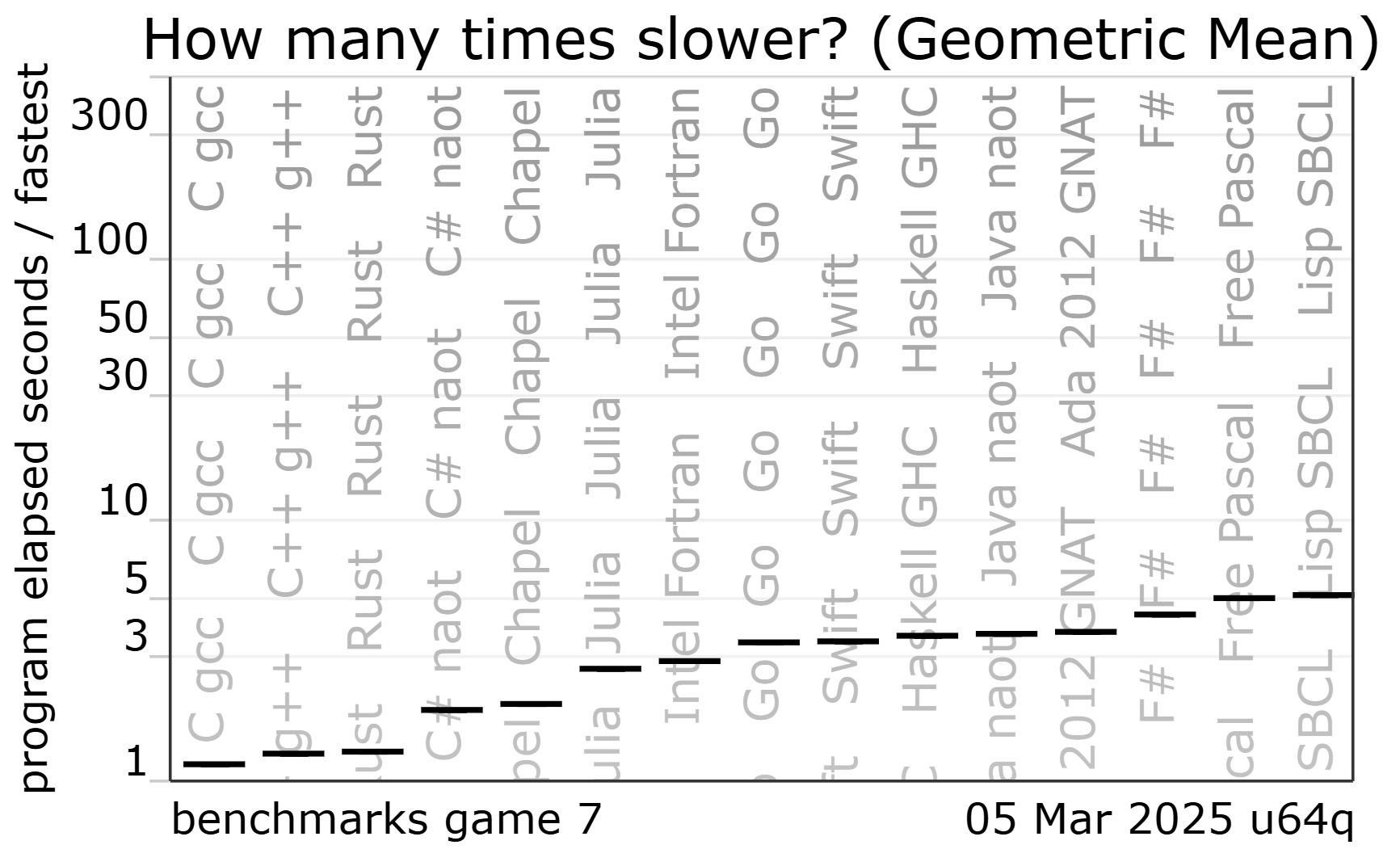}
    \caption{Running time comparison of programming languages. Numbers show how many times slower each language is on average relative to the fastest implementation for each benchmark (Lower is better). Data from Computer Language Benchmarks Game. }
    \label{fig:RustBenchmark}
\end{figure}

We converted the Rust code of \HGMatch (one of main competitors) to C++, and compared it with the original Rust version by authors. We tried to translate the code as faithfully as possible. \Cref{fig:lang_comp} shows the query processing time for queries on HC, NC, MA, SB, HB, and TC datasets. As shown in the figure, Rust and C++ versions have similar performances, with the C++ version being marginally faster (1.013 times in geometric average across all queries).

Both C++ and Rust are system programming languages with comparable performances.
According to Zhang et al.\cite{RustPerformance}, Rust programs run 1.77 times slower than C on average.
Similarly, the Computer Language Benchmarks Game\footnote{https://benchmarksgame-team.pages.debian.net/benchmarksgame/box-plot-summary-charts.html, Last accessed: Sep 1, 2025} reports how many times slower each language is on average relative to the fastest implementation for each benchmark (\Cref{fig:RustBenchmark}). The result shows that both Rust and C++ are slower than C, but they have similar performances.
Therefore, we kept the original Rust code of \HGMatch by authors in our experiments.

In addition, the performance differences we observe far exceed what could be attributed to language choice alone. Our algorithm (MaCH) and OHMiner, which shows the second-best performance in our experiments, are both implemented in C++. HGMatch and GuP are implemented in Rust. Our experiments (\Cref{fig:Time_consumption}) show that MaCH outperforms HGMatch by up to three orders of magnitude (e.g., in the SA dataset for queries of size 15) and it outperforms GuP even more than that.

\fi

\end{document}